\providecommand{\customgenericname}{}
\newcommand{\newcustomtheorem}[2]{%
  \newenvironment{#1}[1]
  {%
   \renewcommand\customgenericname{#2}%
   \renewcommand\theinnercustomgeneric{##1}%
   \innercustomgeneric
  }
  {\endinnercustomgeneric}
}
\tikzset{
  on each segment/.style={
    decorate,
    decoration={
      show path construction,
      moveto code={},
      lineto code={
        \path [#1]
        (\tikzinputsegmentfirst) -- (\tikzinputsegmentlast);
      },
      curveto code={
        \path [#1] (\tikzinputsegmentfirst)
        .. controls
        (\tikzinputsegmentsupporta) and (\tikzinputsegmentsupportb)
        ..
        (\tikzinputsegmentlast);
      },
      closepath code={
        \path [#1]
        (\tikzinputsegmentfirst) -- (\tikzinputsegmentlast);
      },
    },
  },
  mid arrow/.style={postaction={decorate,decoration={
        markings,
        mark=at position .5 with {\arrow[#1]{stealth}}
      }}},
}
\tikzset{
  big dot/.style={
    circle, inner sep=0pt, 
    minimum size=2mm, fill=black
  }
}
\tikzset{
  small dot/.style={
    circle, inner sep=0pt, 
    minimum size=1mm, fill=black
  }
}
\mathchardef\mhyphen="2D
\newlist{enumaa}{enumerate*}{5}
\setlist[enumaa]{label={(\emph{\alph*})}}
\newcommand{\leftrarrows}{\mathrel{\raise.75ex\hbox{\oalign{%
  $\scriptstyle\leftarrow$\cr
  \vrule width0pt height.5ex$\hfil\scriptstyle\relbar$\cr}}}}
\newcommand{\lrightarrows}{\mathrel{\raise.75ex\hbox{\oalign{%
  $\scriptstyle\relbar$\hfil\cr
  $\scriptstyle\vrule width0pt height.5ex\smash\rightarrow$\cr}}}}
\newcommand{\Rrelbar}{\mathrel{\raise.75ex\hbox{\oalign{%
  $\scriptstyle\relbar$\cr
  \vrule width0pt height.5ex$\scriptstyle\relbar$}}}}
\newcommand{\longleftrightarrows}{\leftrarrows\joinrel\Rrelbar\joinrel\lrightarrows}
\title{Category-theoretical Semantics of the Description Logic $\mathcal{ALC}$ (extended version)}
\author{Chan Le Duc}
\institute{Universit\'e Sorbonne Paris Nord, LIMICS, INSERM, U1142, F-93000, Bobigny, France
\\
\email{chan.leduc@univ-paris13.fr}
}
\begin{document}
\maketitle

\begin{abstract}
Category theory can be used to state formulas in First-Order Logic without using set membership. Several notable results in logic such as proof of the continuum hypothesis 
can be elegantly rewritten  in    category theory. We propose in  this paper a reformulation of the usual set-theoretical semantics of the description logic  $\mathcal{ALC}$ by using categorical language. In this setting, $\mathcal{ALC}$ concepts are represented as objects, concept subsumptions as  arrows, and memberships as logical quantifiers over objects and arrows of  categories. 
Such a category-theore\-tical semantics provides a more modular representation of the semantics of $\mathcal{ALC}$ and a new way to design algorithms for reasoning. \footnote{Copyright \textcopyright  2021 for this paper by its authors. Use permitted under
     Creative Commons License Attribution 4.0 International (CC BY 4.0)}
\end{abstract}

\section{Introduction} \label{sec:intro}

Languages based on Description Logics (DLs) \cite{baa10} such as OWL \cite{pat04}, OWL2 \cite{grau08}, are widely used to represent  ontologies   in semantics-based applications. $\mathcal{ALC}$ is the smallest DL involving roles which is closed under negation. It is a suitable logic for a first attempt to replace the usual set-theoretical semantics by another one.  A pioneer work by Lawvere \cite{law64} provided  an appropriate axiomatization of the category of sets by replacing set membership with the composition of functions. However, it was not indicated whether the categorial axioms are ``semantically"  equivalent to the axioms based on set membership. As pointed out by Goldblatt \cite{gol06}, this may lead to a very different semantics for negation.    

There have been very few works on connections between category theory and DLs. Spivak et al. \cite{spivak2012} used   category theory to define a high-level and graphical language comparable with OWL for knowledge representation rather than a foundational formalism for reasoning. 
In this paper,  we propose a rewriting of  the usual set-theoretical  semantics of $\mathcal{ALC}$ by using objects and arrows, which are  two fundamental  elements of category theory, to represent concepts and subsumptions respectively. 

\begin{example}\label{ex:intro}
We would like to model 5 states of a meeting: $\mathsf{arrived}$, $\mathsf{filled\mhyphen  room}$,  
$\mathsf{starting}$,  $\mathsf{started}$,  $\mathsf{finished}$; 
and  a chronological order between these states as depicted in Figure~\ref{fig:example}.
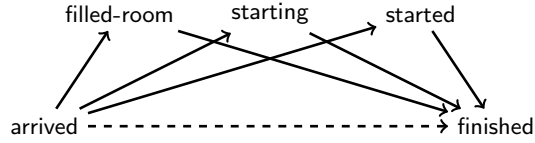
\begin{figure}[!htbp]
\centering
\begin{tikzpicture}[line width=1pt] 

\node[] (I) at (-3,-1) [label={[xshift=-0.3cm, yshift=-0.6cm]}] {$\mathsf{arrived}$};
\node[] (T) at (3,-1) [label={[xshift=-0.3cm, yshift=-0.6cm]}] {$\mathsf{finished}$};
\node[] (X) at (-2,0.5) [label={[xshift=-0.3cm, yshift=-0.6cm]}] {$\mathsf{filled\mhyphen room}$};
\node[] (Y) at (0,0.5) [label={[xshift=-0.3cm, yshift=-0.6cm]}] {$\mathsf{starting}$};
\node[] (Z) at (2,0.5) [label={[xshift=-0.3cm, yshift=-0.6cm]}] {$\mathsf{started}$};

 \draw[->] (I) to[]node[left=2pt]{ } (X);
 \draw[->] (X) to[]node[right=2pt]{ } (T);
 \draw[->] (Y) to[]node[right=2pt]{ } (T);
 \draw[->] (Z) to[]node[right=2pt]{ } (T);
 \draw[->] (I) to[]node[left=2pt]{ } (Y); 
 \draw[->] (I) to[]node[left=2pt]{ } (Z); 
 \draw[->,dashed] (I) to[]node[left=2pt]{ } (T); 
 
\end{tikzpicture}
\caption{Objects and arrows for a meeting}\label{fig:example}
\end{figure}
 If we use concepts and subsumptions in $\mathcal{ALC}$ to represent the states and order under set-theoretical semantics, then each such a concept should represent a set of time-points. This might not be compatible with the semantics of subsumptions with set membership since, for instance a time-point in $\mathsf{filled\mhyphen room}$ cannot belong to $\mathsf{finished}$ (at least this is not our modelling intention). If we use roles to model this chronological order, then transitive roles which   are not allowed in  $\mathcal{ALC}$,  should be needed to infer the dashed relationship. However, objects and arrows under category-theoretical semantics can be straightforwardly used to present such states and chronological order without referring to set membership. Moreover, transitive relationships such as the dashed one can be  inferred under category-theoretical semantics.
\end{example}

The paper is organized as follows. We  begin by translating  semantic constraints related to each $\mathcal{ALC}$ constructor into arrows between objects. Then, we check whether the obtained arrows allow to restore usual properties. If it is not the case, we add  new arrows and objects for capturing the missing properties without going beyond the set-theoretical semantics. For instance, it does not suffice to define category-theoretical semantics of negation $\neg C$ by stating $C \sqcap \neg C \longrightarrow \bot$ and $\top \longrightarrow C \sqcup \neg C$ because it is not possible to obtain  arrows such as  $C \longleftrightarrows \neg \neg C$ from this definition. After providing category-theoretical semantics of each constructor, unsatisfiability of  an $\mathcal{ALC}$ concept $C$ with respect to an ontology is reducible to the existence of a category consisting of an arrow $C\longrightarrow \bot$.   As a main result of the paper, we show that an $\mathcal{ALC}$ concept is       category-theoretically unsatisfiable iff it is set-theoretically unsatisfiable. This result allows us not only to employ  objects and arrows to design reasoning procedures in the new setting, but also to identify interesting sublogics by removing/weakening some arrows from categorial definitions. 
 
\section{Syntax and set-theoretical semantics of $\mathcal{ALC}$} \label{sec:alc}
We present syntax and semantics of the Description Logic $\mathcal{ALC}$ \cite{baa10} with  TBoxes and some basic inference  problems.

\begin{definition}[Syntax and set-theoretical  semantics]\label{def:alc}
Let  $\mathbf{R}$ and  $\mathbf{C}$   be non-empty sets of \emph{role names} and \emph{concept names}    respectively. 
The set of $\mathcal{ALC}$-concepts is inductively defined as the smallest set containing all concept names  in $\mathbf{C}$ with  $\top$, $\bot$ and complex concepts that are inductively defined as follows: $C\sqcap D$, $C\sqcup D$, $\neg C$, $\exists R.C$, $\forall R.C$ where  $C$ and $D$ are $\mathcal{ALC}$-concepts, and $R$ is a role name in $\mathbf{R}$. An axiom $C\sqsubseteq D$ is called a general concept inclusion (GCI) where $C,D$ are (possibly complex) $\mathcal{ALC}$-concepts. An $\mathcal{ALC}$ ontology  $\mathcal{O}$ is a finite set of GCIs. 
 
An interpretation ${\cal I}=\langle\Delta^{\cal{I}},\cdot^{\cal{I}}\rangle$ consists of a non-empty set $\Delta^{\cal{I}}$ ({\em domain}), and a function $\cdot^{\cal{I}}$ ({\em interpretation function}) which  associates a subset of $\Delta^{\cal{I}}$ to each concept name, an element  in $\Delta^\mathcal{I}$ to each individual, and a subset of $\Delta^{\cal{I}}\times \Delta^{\cal{I}}$ to each role name such that  
\begin{center}
\begin{tabular}{@{}l@{}ll@{}} 
$\top^{\cal{I}}$ & $=$ & $\Delta^{\cal{I}}$, $\bot^{\cal I} = \emptyset$, $(\neg C)^{\cal I}  =\Delta^{\cal{I}}\setminus C^{\cal I}$\\ 
$~(C\sqcap D)^{\cal I}$ &$=$ & $C^{\cal I}\cap D^{\cal I}$, $(C\sqcup D)^{\cal I}=C^{\cal I}\cup D^{\cal I}$\\
$(\exists R.C)^{\cal I}$ &$=$ & $\{x\!\in\!\Delta^{\cal{I}}\!\mid\!\exists y\!\in\!\Delta^{\cal{I}},\!\langle x,y\rangle\!\in\!{R^{\cal I}}\!\wedge\!y\!\in\! C^{\cal I}\}$\\
$(\forall R.C)^{\cal I}$ &$=$ & $\{x\!\in\!\Delta^{\cal{I}}\!\mid\! \forall y\!\in\!\Delta^{\cal{I}},\langle x,y\rangle\!\in\! {R^{\cal I}}\!\implies \! y\!\in\! C^{\cal I}\}$\\
\end{tabular}  
\end{center}

An interpretation $\mathcal{I}$ satisfies  a GCI $C\sqsubseteq D$  if $C^{\cal I}\subseteq D^{\cal I}$.
$\mathcal{I}$ is a model of  $\mathcal{O}$, written $\mathcal{I}\models  \mathcal{O}$, if   $\mathcal{I}$ satisfies  each GCI in $\mathcal{O}$.   
 
In this case, we say that $\mathcal{O}$ is {\em set-theoretically consistent}, and {\em set-theoretically inconsistent} otherwise. A concept $C$ is {\em set-theoretically satisfiable} with respect to $\mathcal{O}$ if there is a model $\mathcal{I}$ of $\mathcal{O}$ such that $C^\mathcal{I}\neq \emptyset$, and {\em set-theoretically unsatisfiable} otherwise. We say that a GCI $C\sqsubseteq D$ is {\em set-theoretically entailed} by $\mathcal{O}$, written $\mathcal{O}\models C\sqsubseteq D$, if $C^\mathcal{I}\subseteq D^\mathcal{I}$ for all models $\mathcal{I}$ of $\mathcal{O}$. The pair $\langle\mathbf{C},  \mathbf{R}\rangle$  is called the signature of $\mathcal{O}$.
\end{definition}

\section{Category-theoretical semantics of  $\mathcal{ALC}$} \label{sec:cat-semantics}

We can observe that the set-theoretical semantics of $\mathcal{ALC}$ is based on  set membership relationships   while ontology inferences such as consistency, concept subsumption involve  set inclusions. This explains why  inference algorithms  developed in this setting  build often a set of individuals connected  in some way for representing a model.  

In this section, we use some basic notions in category theory to characterize the semantics of $\mathcal{ALC}$. Instead of set membership, we use in this categorical language objects and arrows to define  semantics of a given object. Although the present paper is self-contained, we refer the readers to textbooks   \cite{gol06,saunders92} on category theory for further information.

\begin{definition}[Syntax categories]\label{def:syntax-cat}
Let  $\mathbf{R}$ and  $\mathbf{C}$   be non-empty sets of \emph{role names} and \emph{concept names}    respectively. We define a \emph{concept syntax category} $\mathscr{C}_c$ and a \emph{role syntax category} $\mathscr{C}_r$ from the signature $\langle\mathbf{C},  \mathbf{R}\rangle$  as follows:
\begin{enumerate}
   
\item Each role name $R$ in  $\mathbf{R}$ is an object $R$  of   $\mathscr{C}_r$. In particular, there are initial and terminal objects $R_\bot$ and $R_\top$ in     $\mathscr{C}_r$ with arrows $R\longrightarrow R_\top$ and $R_\bot\longrightarrow R$   for all object $R$ of  $\mathscr{C}_r$. There is also an identity arrow $R\longrightarrow R$ for each object $R$  of  $\mathscr{C}_r$.
 
    \item Each concept name in $\mathbf{C}$  is an object of $\mathscr{C}_c$. In particular, $\bot$ and $\top$  are respectively initial and terminal objects, i.e. there are arrows $C\longrightarrow \top$ and $\bot \longrightarrow C$ for each object $C$ of  $\mathscr{C}_c$. Furthermore, for each object  $C$  of  $\mathscr{C}_c$   there is an identity arrow $C\longrightarrow C$, 
    and for each object  $R$  of  $\mathscr{C}_r$ there  are two objects of $\mathscr{C}_c$, namely $\mathsf{dom}(R)$ and $\mathsf{cod}(R)$.

\item If there are arrows $E\longrightarrow F$ and $F\longrightarrow G$ in $\mathscr{C}_c$ (resp. $\mathscr{C}_r$), then there is an arrow $E\longrightarrow G$ in $\mathscr{C}_c$ (resp. $\mathscr{C}_r$).

\item There are two functors $\mathsf{dom}$ and $\mathsf{cod}$ from  $\mathscr{C}_r$ to $\mathscr{C}_c$, i.e.  they associate two objects $\mathsf{dom}(R)$ and $\mathsf{cod}(R)$ of $\mathscr{C}_c$ to each object $R$ of $\mathscr{C}_r$ such that  
\begin{enumerate}
\item $\mathsf{dom}(R_\top)=\top$, $\mathsf{cod}(R_\top)=\top$, $\mathsf{dom}(R_\bot)=\bot$ and $\mathsf{cod}(R_\bot)=\bot$.
    \item 
if there is an arrow $R\longrightarrow R$ in $\mathscr{C}_r$ then there are arrows $\mathsf{dom}(R)\longrightarrow \mathsf{dom}(R)$ and $\mathsf{cod}(R)\longrightarrow \mathsf{cod}(R)$.
\item if there are arrows $R\longrightarrow R'\longrightarrow R''$ in $\mathscr{C}_r$ then there are arrows $\mathsf{dom}(R)\longrightarrow \mathsf{dom}(R'')$ and $\mathsf{cod}(R)\longrightarrow \mathsf{cod}(R'')$.

\item if there is an arrow $\mathsf{dom}(R)\longrightarrow \bot$ or  $\mathsf{cod}(R)\longrightarrow \bot$ in $\mathscr{C}_c$, then there is  an arrow $R\longrightarrow R_\bot$ in $\mathscr{C}_r$. 
\end{enumerate}

\end{enumerate}
For each arrow $E\longrightarrow F$ in $\mathscr{C}_c$ or $\mathscr{C}_r$, $E$ and $F$ are respectively  called \emph{domain} and \emph{codomain} of the  arrow. 

\end{definition}

Definition~\ref{def:onto-cat} provides a general framework with syntax elements and necessary properties coming from category theory. We need to ``instantiate" it to obtain categories which include semantic constraints  coming from axioms. The following definition extends syntax categories in such a way that they admit the axioms of an $\mathcal{ALC}$  ontology as  arrows.
 
\begin{definition}[Ontology categories]\label{def:onto-cat}
Let $C_0$ be an $\mathcal{ALC}$ concept and $\mathcal{O}$ an $\mathcal{ALC}$ ontology from a  signature $\langle \mathbf{C}, \mathbf{R}\rangle$.  We define a \emph{concept ontology category} $\mathscr{C}_c\langle C_0, \mathcal{O}\rangle$ and a \emph{role ontology category} $\mathscr{C}_r\langle C_0, \mathcal{O}\rangle$  as follows:
\begin{enumerate}

\item $\mathscr{C}_c\langle C_0, \mathcal{O}\rangle$ and  $\mathscr{C}_r\langle C_0, \mathcal{O}\rangle$ are syntax categories from  $\langle \mathbf{C}, \mathbf{R}\rangle$.

\item $C_0$ is an object of $\mathscr{C}_c\langle C_0, \mathcal{O}\rangle$.

\item If $E \sqsubseteq F$ is an axiom of $\mathcal{O}$, then $\neg E \sqcup F$ is an object of  $\mathscr{C}_c\langle C_0, \mathcal{O}\rangle$ and there is an arrow $\top \longrightarrow \neg E \sqcup  F$  in $\mathscr{C}_c\langle C_0, \mathcal{O}\rangle$.

\end{enumerate}
\end{definition}

In this paper, an object of $\mathscr{C}_c\langle C_0, \mathcal{O}\rangle$ and $\mathscr{C}_r\langle C_0, \mathcal{O}\rangle$ is called concept and role object respectively. We transfer  the vocabulary used in Description Logics to  categories as follows. A concept object $C\sqcup D$, $C\sqcap D$ or $\neg C$ is respevtively called disjunction, conjunction and negation object. For an existential  restriction object $\exists R.C$ or universal  restriction object $\forall R.C$, $C$ is called the \emph{filler} of $\exists R.C$ and $\forall R.C$.

In the sequel,  we introduce category-theoretical semantics of  disjunction,  conjunction, negation, existential and universal restriction objects if they appear in $\mathscr{C}_c\langle C_0, \mathcal{O}\rangle$.
 Some of them require more \emph{explicit} properties than those needed for the set-theoretical semantics. This is due to the fact that set membership is translated into arrows in a syntax category.  Since semantics of an object in a category depends to relationships with another ones, the following definitions need to add to $\mathscr{C}_c\langle C_0, \mathcal{O}\rangle$ and $\mathscr{C}_r\langle C_0, \mathcal{O}\rangle$ new objects and arrows. 
 
 \begin{definition}[Category-theoretical semantics of  disjunction]\label{def:disj}
Let $C, D, C \sqcup D$ be   concept objects of $\mathscr{C}_c$.  Category-theoretical semantics of $C \sqcup D$ is defined by using arrows in $\mathscr{C}_c$ as follows. There are arrows $i,j$ from $C$ and $D$ to $C\sqcup D$, and if there is an object $X$ and arrows $f,g$ from $C,D$ to $X$, then there is  an arrow $k$ such that  the following diagram commutes  :

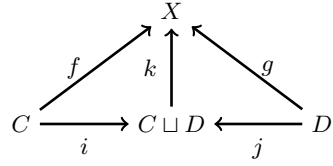
\begin{figure}[!htbp]
\centering
\begin{tikzpicture}[line width=1pt] 
\node[] (AB) at (0,-1) [label={[xshift=-0.3cm, yshift=-0.6cm]}] {$C \sqcup D$};
\node[] (A) at (-2,-1) [label={[xshift=-0.3cm, yshift=-0.6cm]}] {$C$};
\node[] (B) at (2,-1) [label={[xshift=-0.3cm, yshift=-0.6cm]}] {$D$};
\node[] (X) at (0,0.5) [label={[xshift=-0.3cm, yshift=-0.6cm]}] {$X$};

 \draw[->] (A) to[]node[below=2pt]{${i}$} (AB);
 \draw[->] (B) to[]node[below=2pt]{${j}$} (AB);
 \draw[->] (A) to[]node[left=2pt]{${f}$} (X);
 \draw[->] (B) to[]node[right=2pt]{${g}$} (X);
 \draw[->] (AB) to[]node[left=2pt]{${k}$} (X); 
\end{tikzpicture}
\caption{Commutative diagram for disjunction}\label{fig:disj}
\end{figure}

The diagram in Figure~\ref{fig:disj} can be rephrased  as follows:
\begin{align}
&C \longrightarrow C \sqcup D \text{ and } D  \longrightarrow  C \sqcup D   \label{disj01} \\
& \forall X, C    \longrightarrow  X \text{ and } D \longrightarrow  X  \Longrightarrow  C \sqcup D  \longrightarrow X \label{disj02}  
\end{align} 
\end{definition}
 
Intuitively speaking, Arrows~(\ref{disj01}) and (\ref{disj02}) tell us that  $C\sqcup D$ is the ``smallest" object which is ``greater" than $C$ and $D$.   

\begin{lemma}\label{lem:disj} The category-theorectical semantics of $C\sqcup D$ characterized by  Definition~\ref{def:disj} is compatible with the set-theoretical semantics of $C\sqcup D$, that means
if  $\langle \Delta^\mathcal{I}, \cdot^\mathcal{I}\rangle$ is an interpretation   under  set-theoretical semantics, then the following holds:

$(C\sqcup D)^\mathcal{I}=C^\mathcal{I}\cup D^\mathcal{I}$  iff
\begin{align}
&C^\mathcal{I} \subseteq (C  \sqcup D)^\mathcal{I  }, D^\mathcal{I}  \subseteq  (C  \sqcup D)^\mathcal{I}  \label{disj1}   \\
& \forall X\subseteq \Delta^\mathcal{I}, C^\mathcal{I}    \subseteq  X,   D^\mathcal{I}  \subseteq  X  \Longrightarrow  (C \sqcup D)^\mathcal{I}  \subseteq X   \label{disj2}  
\end{align}
\end{lemma}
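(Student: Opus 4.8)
The plan is to prove the biconditional in two directions, using the elementary observation that the right-hand side — conditions~(\ref{disj1}) and (\ref{disj2}) — is precisely the statement that $(C \sqcup D)^\mathcal{I}$ is a least upper bound of $C^\mathcal{I}$ and $D^\mathcal{I}$ in the poset $(\mathcal{P}(\Delta^\mathcal{I}), \subseteq)$, and that in this poset least upper bounds coincide with set-theoretic unions. In other words, this is the specialization, to the thin category of subsets of $\Delta^\mathcal{I}$, of the general fact that coproducts, when they exist, are determined up to isomorphism by their universal property; the proof just unwinds that observation.

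For the ``only if'' direction, I would assume $(C \sqcup D)^\mathcal{I} = C^\mathcal{I} \cup D^\mathcal{I}$. Then (\ref{disj1}) holds because $C^\mathcal{I} \subseteq C^\mathcal{I}\cup D^\mathcal{I}$ and $D^\mathcal{I} \subseteq C^\mathcal{I}\cup D^\mathcal{I}$. For (\ref{disj2}), given any $X \subseteq \Delta^\mathcal{I}$ with $C^\mathcal{I}\subseteq X$ and $D^\mathcal{I}\subseteq X$, the defining property of union gives $C^\mathcal{I}\cup D^\mathcal{I}\subseteq X$, i.e. $(C\sqcup D)^\mathcal{I}\subseteq X$.

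For the ``if'' direction, I would assume (\ref{disj1}) and (\ref{disj2}). From (\ref{disj1}) one gets $C^\mathcal{I}\cup D^\mathcal{I}\subseteq (C\sqcup D)^\mathcal{I}$, since the union is by definition the smallest set containing both $C^\mathcal{I}$ and $D^\mathcal{I}$. For the reverse inclusion, the single non-routine step is to instantiate the universally quantified $X$ of (\ref{disj2}) with the concrete witness $X := C^\mathcal{I}\cup D^\mathcal{I}$: the premises $C^\mathcal{I}\subseteq X$ and $D^\mathcal{I}\subseteq X$ then hold trivially, and (\ref{disj2}) delivers $(C\sqcup D)^\mathcal{I}\subseteq C^\mathcal{I}\cup D^\mathcal{I}$. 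Combining the two inclusions yields the desired equality.

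I do not expect a genuine obstacle here; the only point worth stating with care is that the quantifier in (\ref{disj2}) ranges over \emph{all} subsets of $\Delta^\mathcal{I}$, which is exactly what legitimizes the self-referential choice $X = C^\mathcal{I}\cup D^\mathcal{I}$ (the ``competing cocone'' being the union itself). The analogous compatibility lemmas for conjunction (greatest lower bound / intersection) and for the quantifier constructors should follow the same two-inclusion template; I anticipate that the negation case is where the ``extra'' arrows announced in the introduction — those forcing $C \longleftrightarrows \neg\neg C$ — will actually be needed, so the routine argument sketched here will not transfer verbatim to that constructor.
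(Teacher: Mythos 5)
Your proof is correct and follows essentially the same route as the paper: the forward direction is elementary set reasoning, and the backward direction hinges on exactly the same key step the paper uses, namely instantiating the universally quantified $X$ in (\ref{disj2}) with $X = C^\mathcal{I}\cup D^\mathcal{I}$ and combining with the inclusion obtained from (\ref{disj1}).
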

\begin{proof}
 \noindent ``$\Longleftarrow$". Due to (\ref{disj1}) we have $C^\mathcal{I}\cup D^\mathcal{I} \subseteq (C\sqcup D)^\mathcal{I}$. Let   $X=C^\mathcal{I}\cup D^\mathcal{I}$. Due to (\ref{disj2}) we have  $(C\sqcup D)^\mathcal{I} \subseteq X=C^\mathcal{I}\cup D^\mathcal{I}$. 
 
 \noindent ``$\Longrightarrow$".  From $(C\sqcup D)^\mathcal{I}=C^\mathcal{I}\cup D^\mathcal{I}$, we have (\ref{disj1}).  Let $x\in (C\sqcup D)^\mathcal{I}$. Due to $(C\sqcup D)^\mathcal{I}=C^\mathcal{I}\cup D^\mathcal{I}$, we have  $x\in C^\mathcal{I}$ or $x\in D^\mathcal{I}$. Hence, $x\in X$  since $C^\mathcal{I}    \subseteq  X$ and  $D^\mathcal{I}  \subseteq  X$.\hfill$\square$
\end{proof}
 
At first glance, one can follow the same idea used in Definition~\ref{def:disj}   to define  category-theoretical semantics of $C\sqcap D$ as described in Figure~\ref{fig:conj}. They tell us that   $C\sqcap D$ is the ``greatest" object which is ``smaller" than $C$ and $D$.

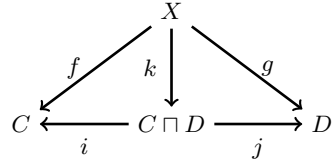
\begin{figure}[!htbp]
\centering
\begin{tikzpicture}[line width=1pt] 
\node[] (AB) at (0,-1) [label={[xshift=-0.3cm, yshift=-0.6cm]}] {$C \sqcap D$};
\node[] (A) at (-2,-1) [label={[xshift=-0.3cm, yshift=-0.6cm]}] {$C$};
\node[] (B) at (2,-1) [label={[xshift=-0.3cm, yshift=-0.6cm]}] {$D$};
\node[] (X) at (0,0.5) [label={[xshift=-0.3cm, yshift=-0.6cm]}] {$X$};

 \draw[->] (AB) to[]node[below=2pt]{${i}$} (A);
 \draw[->] (AB) to[]node[below=2pt]{${j}$} (B);
 \draw[->] (X) to[]node[left=2pt]{${f}$} (A);
 \draw[->] (X) to[]node[right=2pt]{${g}$} (B);
 \draw[->] (X) to[]node[left=2pt]{${k}$} (AB); 
\end{tikzpicture}
\caption{Commutative diagram for a  weak conjunction}\label{fig:conj}
\end{figure}

 However, such a definition is not sufficiently strong in order that the
distributive property  of conjunction over disjunction is ensured. Hence, we need a stronger definition for conjunction.

\begin{definition}[Category-theoretical semantics of  conjunction]\label{def:conj}
 
Let $C, D, E, C \sqcap D, C \sqcap E$, $D \sqcup E$, $C\sqcap (D \sqcup E)$  and $(C \sqcap D)\sqcup  (C \sqcap E)$ be   objects of $\mathscr{C}_c\langle C_0, \mathcal{O}\rangle$. Category-theoretical semantics of $C \sqcap D$ is defined by using the following arrows in $\mathscr{C}_c\langle C_0, \mathcal{O}\rangle$. 
\begin{align}
&C \sqcap D\longrightarrow  C \text{ and } C \sqcap D  \longrightarrow  D   \label{conj01} \\
& \forall X, X  \longrightarrow  C \text{ and } X \longrightarrow  D    \Longrightarrow  X  \longrightarrow C \sqcap D \label{conj1}\\
& C \sqcap (D \sqcup E) \longrightarrow (C\sqcap D)\sqcup  (C\sqcap E)\label{conj2} 
\end{align} 
\end{definition}

 Note that under the set-theoretical semantics  the distributive property of disjunction over conjunction is not \emph{independent}, i.e. it is a consequence of the definitions of disjunction and conjunction. However, this does not hold under the category-theoretical semantics.   The following lemma provides the connection between the usual set semantics of conjunction and the category-theoretical one given in Definition~\ref{def:conj}. In this lemma, Properties~(\ref{conj001}-\ref{conj003}) 
  are rewritings of Arrows~(\ref{conj01}-\ref{conj2}) in set theory.  

\begin{lemma}\label{lem:conj}
The category-theorectical semantics of $C\sqcap D$ characterized by  Definition~\ref{def:conj} is compatible with the set-theoretical semantics of $C\sqcap D$, that means
if $\langle \Delta^\mathcal{I}, \cdot^\mathcal{I}\rangle$ is an interpretation under set-theoretical semantics, then the following holds:  

$(U\sqcap V)^\mathcal{I}=U^\mathcal{I}\cap V^\mathcal{I}$ for all concepts $U,V$  iff
\begin{align}
&(C \sqcap D)^\mathcal{I} \subseteq  C^\mathcal{I},    (C \sqcap D)^\mathcal{I} \subseteq  D^\mathcal{I}   \label{conj001} \\
& \forall X\subseteq \Delta^\mathcal{I}, X \subseteq  C^\mathcal{I}, X  \subseteq  D^\mathcal{I} \Longrightarrow 
X  \subseteq (C \sqcap D)^\mathcal{I}  \label{conj002}\\
&(C \sqcap (D \sqcup E))^\mathcal{I} \subseteq  ((C \sqcap D)\sqcup (C \sqcap E))^\mathcal{I}  \label{conj003}\\
&\hspace{1cm}\text{for all concepts } C,D \text{ and } E.\nonumber
\end{align}
\end{lemma}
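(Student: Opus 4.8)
The plan is to prove the two directions separately, exactly as was done for Lemma~\ref{lem:disj}, but now paying attention to the extra distributivity clause.

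\medskip

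\noindent\textbf{Direction ``$\Longleftarrow$''.} Assume Properties~(\ref{conj001}--\ref{conj003}) hold for the relevant fixed concepts. We must show $(U\sqcap V)^\mathcal{I}=U^\mathcal{I}\cap V^\mathcal{I}$ for \emph{all} concepts $U,V$. The key observation is that although Definition~\ref{def:conj} is stated for particular objects $C,D,E$, the property we are asked to derive is universally quantified; so the natural move is to instantiate the hypotheses. From (\ref{conj001}) applied to $U,V$ in place of $C,D$ we get $(U\sqcap V)^\mathcal{I}\subseteq U^\mathcal{I}\cap V^\mathcal{I}$. For the reverse inclusion, set $X=U^\mathcal{I}\cap V^\mathcal{I}$; then $X\subseteq U^\mathcal{I}$ and $X\subseteq V^\mathcal{I}$, so (\ref{conj002}) yields $X\subseteq (U\sqcap V)^\mathcal{I}$, i.e.\ $U^\mathcal{I}\cap V^\mathcal{I}\subseteq(U\sqcap V)^\mathcal{I}$. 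Hence $(U\sqcap V)^\mathcal{I}=U^\mathcal{I}\cap V^\mathcal{I}$. Note that clause (\ref{conj003}) is not needed for this direction — it is automatically satisfied once conjunction is interpreted set-theoretically.

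\medskip

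\noindent\textbf{Direction ``$\Longrightarrow$''.} Assume $(U\sqcap V)^\mathcal{I}=U^\mathcal{I}\cap V^\mathcal{I}$ for all $U,V$. Then (\ref{conj001}) and (\ref{conj002}) are immediate: $(C\sqcap D)^\mathcal{I}=C^\mathcal{I}\cap D^\mathcal{I}\subseteq C^\mathcal{I}$ and likewise for $D^\mathcal{I}$, and if $X\subseteq C^\mathcal{I}$ and $X\subseteq D^\mathcal{I}$ then $X\subseteq C^\mathcal{I}\cap D^\mathcal{I}=(C\sqcap D)^\mathcal{I}$. For (\ref{conj003}) we additionally need the set-theoretical reading of disjunction; since Lemma~\ref{lem:disj} gives $(U\sqcup V)^\mathcal{I}=U^\mathcal{I}\cup V^\mathcal{I}$ under the stated arrow conditions, we may compute $(C\sqcap(D\sqcup E))^\mathcal{I}=C^\mathcal{I}\cap(D^\mathcal{I}\cup E^\mathcal{I})=(C^\mathcal{I}\cap D^\mathcal{I})\cup(C^\mathcal{I}\cap E^\mathcal{I})=((C\sqcap D)\sqcup(C\sqcap E))^\mathcal{I}$ by the set-theoretic distributive law, which gives the required inclusion (in fact an equality).

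\medskip

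\noindent\textbf{Main obstacle.} The only delicate point is bookkeeping about quantifier scope: Definition~\ref{def:conj} fixes particular objects $C,D,E$, yet the set-theoretic statement is about all concepts, so one has to be careful that (\ref{conj002}) can indeed be applied with $X=U^\mathcal{I}\cap V^\mathcal{I}$ for the concepts $U,V$ at hand — which is fine because the lemma's left-hand condition is itself universally quantified over $X\subseteq\Delta^\mathcal{I}$ and over concepts $C,D$. A secondary point is that the ``$\Longrightarrow$'' direction for (\ref{conj003}) silently uses that disjunction is interpreted as union; I would either invoke Lemma~\ref{lem:disj} explicitly or simply note that the hypothesis already presupposes the standard interpretation of all constructors. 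Everything else is a routine set-theoretic calculation.
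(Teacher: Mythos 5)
Your proposal is correct and follows essentially the same route as the paper's proof: the ``$\Longleftarrow$'' direction instantiates (\ref{conj001}) and (\ref{conj002}) with $X=U^\mathcal{I}\cap V^\mathcal{I}$, and the ``$\Longrightarrow$'' direction verifies (\ref{conj001})--(\ref{conj002}) directly and obtains (\ref{conj003}) by the set-theoretic distributive law together with the union reading of disjunction, exactly as the paper does (your explicit remark about quantifier scope and about invoking Lemma~\ref{lem:disj} for $(D\sqcup E)^\mathcal{I}=D^\mathcal{I}\cup E^\mathcal{I}$ only makes explicit what the paper leaves implicit).
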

\begin{proof}
 \noindent ``$\Longleftarrow$". Due to (\ref{conj001}) we have $(U\sqcap V)^\mathcal{I}  \subseteq U^\mathcal{I}\cap V^\mathcal{I}$. Let $X\subseteq \Delta^\mathcal{I}$   such that $X =U^\mathcal{I}\cap V^\mathcal{I}$. This implies that $X \subseteq U^\mathcal{I}$ and $X \subseteq V^\mathcal{I}$.   Due to (\ref{conj002}), we have  $X\subseteq (U\sqcap V)^\mathcal{I}$.  
 
 \noindent ``$\Longrightarrow$".  From $(C\sqcap D)^\mathcal{I}=C^\mathcal{I}\cap D^\mathcal{I}$   we obtain (\ref{conj001}).    Moreover, if   $X\subseteq C^\mathcal{I}$ and $X\subseteq D^\mathcal{I}$ then $X\subseteq C^\mathcal{I}\cap D^\mathcal{I}=(C\sqcap D)^\mathcal{I}$ by the hypothesis. Thus, (\ref{conj002}) is proved. To prove (\ref{conj003}), we use the hypothesis and the usual set-theoretical semantics as follows:  $(C  \sqcap (D\sqcup E))^\mathcal{I}=C^\mathcal{I} \cap (D\sqcup E)^\mathcal{I}=C^\mathcal{I} \cap (D^\mathcal{I} \cup E^\mathcal{I})=(C^\mathcal{I} \cap D^\mathcal{I}) \cup (C^\mathcal{I} \cap E^\mathcal{I})=(C \sqcap D)^\mathcal{I} \cup (C \sqcap E)^\mathcal{I}=((C \sqcap D)  \sqcup (C \sqcap E))^\mathcal{I}$. 
\end{proof}

 With disjunction and conjunction, one can use the arrows $C\sqcap \neg C\longrightarrow \bot$ and $\top \longrightarrow C\sqcup \neg C$ to define  category-theoretical semantics of  negation. However, such a definition does not allow to entail useful properties (c.f. Lemma~\ref{lem:neg-prop}) which are available    under the set-theoretical semantics.   Therefore, it is required to use more properties to characterize negation  under the category-theoretical semantics.
 
The following example describes a category which verifies all arrows from (\ref{disj1}) to (\ref{conj01}) and (\ref{conj1}), but not  (\ref{conj2}).  This shows that the weak definition of conjunction with Arrows~(\ref{conj01}) and (\ref{conj1}) does not allow to entail Arrow~(\ref{conj2}) (distributivity).

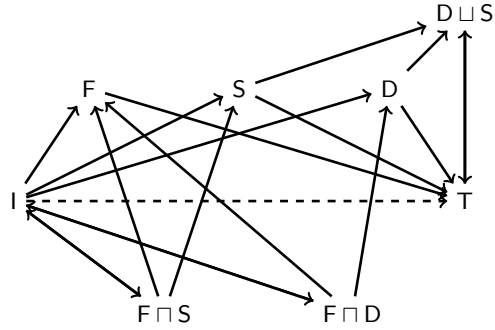
\begin{figure}[!htbp]
\centering
\begin{tikzpicture}[line width=1pt]  
\node[] (I) at (-3,-1) [label={[xshift=-0.3cm, yshift=-0.6cm]}] {$\mathsf{I}$};
\node[] (T) at (3,-1) [label={[xshift=-0.3cm, yshift=-0.6cm]}] {$\mathsf{T}$};
\node[] (X) at (-2,0.5) [label={[xshift=-0.3cm, yshift=-0.6cm]}] {$\mathsf{F}$};
\node[] (Y) at (0,0.5) [label={[xshift=-0.3cm, yshift=-0.6cm]}] {$\mathsf{S}$};
\node[] (Z) at (2,0.5) [label={[xshift=-0.3cm, yshift=-0.6cm]}] {$\mathsf{D}$};

\node[] (DS) at (3,1.5) [label={[xshift=-0.3cm, yshift=-0.6cm]}] {$\mathsf{D}\sqcup \mathsf{S}$};

\node[] (FD) at (1.5,-2.5) [label={[xshift=-0.3cm, yshift=-0.6cm]}] {$\mathsf{F}\sqcap \mathsf{D}$};

\node[] (FS) at (-1,-2.5) [label={[xshift=-0.3cm, yshift=-0.6cm]}] {$\mathsf{F}\sqcap \mathsf{S}$};

\draw[->] (Y) to[]node[left=2pt]{ } (DS);

\draw[->] (Z) to[]node[left=2pt]{ } (DS);

\draw[->] (DS) to[]node[right=2pt]{ } (T);

\draw[->] (T) to[]node[left=2pt]{ } (DS);

\draw[->] (FD) to[]node[left=2pt]{ } (Z);

\draw[->] (FD) to[]node[left=2pt]{ } (X);

\draw[->] (FS) to[]node[left=2pt]{ } (Y);

\draw[->] (FS) to[]node[left=2pt]{ } (X);

\draw[->] (FS) to[]node[left=2pt]{ } (I);

\draw[->] (I) to[]node[left=2pt]{ } (FS);

\draw[->] (FD) to[]node[left=2pt]{ } (I);

\draw[->] (I) to[]node[left=2pt]{ } (FD);
  
 \draw[->] (I) to[]node[left=2pt]{ } (X);
 \draw[->] (X) to[]node[right=2pt]{ } (T);
 \draw[->] (Y) to[]node[right=2pt]{ } (T);
 \draw[->] (Z) to[]node[right=2pt]{ } (T);
 \draw[->] (I) to[]node[left=2pt]{ } (Y); 
 \draw[->] (I) to[]node[left=2pt]{ } (Z); 
 \draw[->,dashed] (I) to[]node[left=2pt]{ } (T); 
  
\end{tikzpicture}
\caption{Category for Example~\ref{ex:distrib}. Some arrows resulting from  transitivity are omitted. }\label{fig:example2}
\end{figure}

\begin{example}\label{ex:distrib} For the sake of brevity, we rename the objects introduced in Example~\ref{ex:intro}  such as   $\mathsf{arrived}$, $\mathsf{filled\mhyphen  room}$, $\mathsf{starting}$, $\mathsf{started}$,
    $\mathsf{finished}$
to $\mathsf{I}$, $\mathsf{F}$,  $\mathsf{S}$,
$\mathsf{D}$,$\mathsf{T}$
respectively. Figure~\ref{fig:example2}  describes a category consisting of  the objects and arrows from Example~\ref{ex:intro} with new ones. In this category, the conjunctions and disjunction with the arrows satisfy   Definitions~\ref{def:disj} and \ref{def:conj}. We have  
 $\mathsf{D}\sqcup \mathsf{S}\longleftrightarrows \mathsf{T}$, and thus $\mathsf{F} \sqcap (\mathsf{D}\sqcup \mathsf{S}) \longleftrightarrows \mathsf{F}\sqcap \mathsf{T} \longleftrightarrows\mathsf{F}$. Moreover, we have $\mathsf{F}\sqcap \mathsf{D}\longleftrightarrows \mathsf{I}$, $\mathsf{F}\sqcap \mathsf{S}\longleftrightarrows \mathsf{I}$, and thus $(\mathsf{F}\sqcap \mathsf{D})\sqcup (\mathsf{F}\sqcap \mathsf{S}) \longleftrightarrows \mathsf{I}$.  
However, since $\mathsf{F}\not\longrightarrow \mathsf{I}$, the distributive property does not hold in this category. 
\end{example}

 With disjunction and conjunction, one can use the arrows $C\sqcap \neg C\longrightarrow \bot$ and $\top \longrightarrow C\sqcup \neg C$ to define  category-theoretical semantics of  negation. However, such a definition does not allow to entail useful properties (c.f. Lemma~\ref{lem:neg-prop}) which are available    under the set-theoretical semantics.   Therefore, it is required to use more properties to characterize negation  under category-theoretical semantics. Indeed, the following example shows that $C\longleftrightarrows \neg \neg C$ do not hold.

\begin{example}\label{ex:neg}
We consider the category  in Figure~\ref{fig:example3}. We have $\mathsf{F} \sqcap \mathsf{S}\longrightarrow \mathsf{I}$ and $\mathsf{T}\longrightarrow  \mathsf{F} \sqcup \mathsf{S}$.  Hence,  we can put $\neg \mathsf{F}=  \mathsf{S}$. Moreover,  we have $\mathsf{D} \sqcap \mathsf{S}\longrightarrow \mathsf{I}$ and $\mathsf{T}\longrightarrow  \mathsf{D} \sqcup \mathsf{S}$.  Hence,  we can put $\neg \mathsf{S}=  \mathsf{D}$, and thus $\neg \neg \mathsf{F}=  \mathsf{D}$. However, there does not exist the arrows $\mathsf{F}\longleftrightarrows  \mathsf{D}$ in this category.
\end{example}

\begin{figure}[!htbp]
\centering
\begin{tikzpicture}[line width=1pt]  
\node[] (I) at (-3,-1) [label={[xshift=-0.3cm, yshift=-0.6cm]}] {$\mathsf{I}$};
\node[] (T) at (3,-1) [label={[xshift=-0.3cm, yshift=-0.6cm]}] {$\mathsf{T}$};
\node[] (X) at (-2,0.5) [label={[xshift=-0.3cm, yshift=-0.6cm]}] {$\mathsf{F}$};
\node[] (Y) at (0,0.5) [label={[xshift=-0.3cm, yshift=-0.6cm]}] {$\mathsf{S}$};
\node[] (Z) at (2,0.5) [label={[xshift=-0.3cm, yshift=-0.6cm]}] {$\mathsf{D}$};

\node[] (DS) at (3,1.5) [label={[xshift=-0.3cm, yshift=-0.6cm]}] {$\mathsf{D}\sqcup \mathsf{S}$};

\node[] (FORS) at (3,-2.5) [label={[xshift=-0.3cm, yshift=-0.6cm]}] {$\mathsf{F}\sqcup \mathsf{S}$};

\node[] (DAS) at (0,-2.5) [label={[xshift=-0.3cm, yshift=-0.6cm]}] {$\mathsf{D}\sqcap \mathsf{S}$};

\node[] (FS) at (-3,-2.5) [label={[xshift=-0.3cm, yshift=-0.6cm]}] {$\mathsf{F}\sqcap \mathsf{S}$};

\draw[->] (Y) to[]node[left=2pt]{ } (DS);

\draw[->] (Z) to[]node[left=2pt]{ } (DS);

\draw[->] (DS) to[]node[right=2pt]{ } (T);

\draw[->] (T) to[]node[left=2pt]{ } (DS);

\draw[->] (T) to[]node[left=2pt]{ } (FORS);

\draw[->] (FORS) to[]node[left=2pt]{ } (T);

\draw[->] (Y) to[]node[left=2pt]{ } (FORS);

\draw[->] (X) to[]node[left=2pt]{ } (FORS);

\draw[->] (FS) to[]node[left=2pt]{ } (Y);

\draw[->] (FS) to[]node[left=2pt]{ } (X);

\draw[->] (FS) to[]node[left=2pt]{ } (I);

\draw[->] (I) to[]node[left=2pt]{ } (FS);

\draw[->] (DAS) to[]node[left=2pt]{ } (I);

\draw[->] (I) to[]node[left=2pt]{ } (DAS);

\draw[->] (DAS) to[]node[left=2pt]{ } (Y);
\draw[->] (DAS) to[]node[left=2pt]{ } (Z);
  
 \draw[->] (I) to[]node[left=2pt]{ } (X);
 \draw[->] (X) to[]node[right=2pt]{ } (T);
 \draw[->] (Y) to[]node[right=2pt]{ } (T);
 \draw[->] (Z) to[]node[right=2pt]{ } (T);
 \draw[->] (I) to[]node[left=2pt]{ } (Y); 
 \draw[->] (I) to[]node[left=2pt]{ } (Z); 
 \draw[->,dashed] (I) to[]node[left=2pt]{ } (T); 
  
\end{tikzpicture}
\caption{Category for Example~\ref{ex:neg}. Some arrows resulting from  transitivity are missing. }\label{fig:example3}
\end{figure}
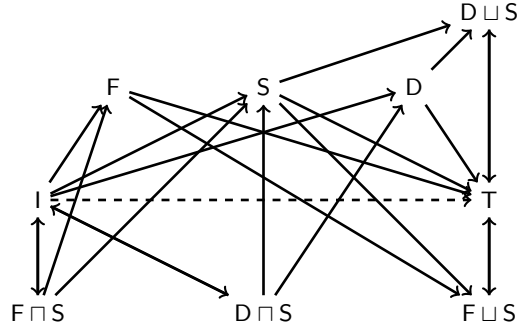

\begin{definition}[Category-theoretical semantics of  negation]\label{def:neg}

Let $C, \neg C, C \sqcap \neg C , C \sqcup \neg C$ be objects  of $\mathscr{C}_c$.  Category-theoretical semantics of $\neg C$ is defined by using the following arrows in $\mathscr{C}_c$. 
\begin{align}
& C \sqcap \neg C \longrightarrow \bot \label{neg-bot} \\
& \top\longrightarrow C \sqcup \neg C\label{neg-top}  \\
& \forall X, C \sqcap X \longrightarrow \bot \Longrightarrow X \longrightarrow \neg C \label{neg-max} \\
& \forall X, \top \longrightarrow C \sqcup X  \Longrightarrow \neg C \longrightarrow X \label{neg-min}
\end{align}
\end{definition}

\begin{lemma}\label{lem:neg}
The category-theorectical semantics of $\neg C$ characterized by Definition~\ref{def:neg} is compatible with the set-theoretical semantics of $\neg C$, that means if $\langle \Delta^\mathcal{I}, \cdot^\mathcal{I}\rangle$ is an interpretation      under set-theoretical semantics, then the following holds:  
 
  $C^\mathcal{I} \cap \neg C^\mathcal{I}\subseteq \bot^\mathcal{I}$ and $\top^\mathcal{I} \subseteq  C^\mathcal{I} \cup \neg C^\mathcal{I}$  imply
\begin{align}
& \forall X, (C \sqcap X)^\mathcal{I} \subseteq \bot^\mathcal{I} \Longrightarrow X^\mathcal{I} \subseteq (\neg C)^\mathcal{I} \label{neg3} \\
& \forall X, \top^\mathcal{I} \subseteq (C \sqcup X)^\mathcal{I}  \Longrightarrow (\neg C)^\mathcal{I} \subseteq X^\mathcal{I} \label{neg4}
\end{align}

\end{lemma}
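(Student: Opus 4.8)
The plan is to show that the two hypotheses already pin down $(\neg C)^\mathcal{I}$ as the set-theoretic complement of $C^\mathcal{I}$ inside $\Delta^\mathcal{I}$, after which (\ref{neg3}) and (\ref{neg4}) follow by elementary set-membership reasoning, exactly in the spirit of the proofs of Lemma~\ref{lem:disj} and Lemma~\ref{lem:conj}. First I would unfold the hypotheses using the set-theoretical semantics of $\bot$ and $\top$: since $\bot^\mathcal{I} = \emptyset$, the assumption $C^\mathcal{I} \cap (\neg C)^\mathcal{I} \subseteq \bot^\mathcal{I}$ is equivalent to $C^\mathcal{I} \cap (\neg C)^\mathcal{I} = \emptyset$; and since $\top^\mathcal{I} = \Delta^\mathcal{I}$, the assumption $\top^\mathcal{I} \subseteq C^\mathcal{I} \cup (\neg C)^\mathcal{I}$ is equivalent to $C^\mathcal{I} \cup (\neg C)^\mathcal{I} = \Delta^\mathcal{I}$. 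A pair of disjoint subsets of $\Delta^\mathcal{I}$ whose union is $\Delta^\mathcal{I}$ are complements of one another, so these two facts together give $(\neg C)^\mathcal{I} = \Delta^\mathcal{I} \setminus C^\mathcal{I}$.

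With this identity in hand, each implication is one line. For (\ref{neg3}), let $X$ be an arbitrary concept with $(C \sqcap X)^\mathcal{I} \subseteq \bot^\mathcal{I}$; by the set-theoretical semantics of conjunction this says $C^\mathcal{I} \cap X^\mathcal{I} = \emptyset$, so every $x \in X^\mathcal{I}$ satisfies $x \notin C^\mathcal{I}$, i.e. $x \in \Delta^\mathcal{I} \setminus C^\mathcal{I} = (\neg C)^\mathcal{I}$, which yields $X^\mathcal{I} \subseteq (\neg C)^\mathcal{I}$. For (\ref{neg4}), let $X$ be an arbitrary concept with $\top^\mathcal{I} \subseteq (C \sqcup X)^\mathcal{I}$, i.e. $C^\mathcal{I} \cup X^\mathcal{I} = \Delta^\mathcal{I}$ by the semantics of disjunction; then any $x \in (\neg C)^\mathcal{I} = \Delta^\mathcal{I} \setminus C^\mathcal{I}$ lies in $C^\mathcal{I} \cup X^\mathcal{I}$ but not in $C^\mathcal{I}$, hence $x \in X^\mathcal{I}$, so $(\neg C)^\mathcal{I} \subseteq X^\mathcal{I}$.

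I do not expect a genuine obstacle. The only point that must be stated explicitly is that we may replace $(C \sqcap X)^\mathcal{I}$ by $C^\mathcal{I} \cap X^\mathcal{I}$ and $(C \sqcup X)^\mathcal{I}$ by $C^\mathcal{I} \cup X^\mathcal{I}$ for the universally quantified concept $X$, which is part of the standing assumption that $\mathcal{I}$ is an interpretation under the set-theoretical semantics of Definition~\ref{def:alc}. Note also that, in contrast with Lemmas~\ref{lem:disj} and \ref{lem:conj}, only the forward implication is asserted here; the converse, recovering the two hypotheses together with (\ref{neg3})-(\ref{neg4}) from $(\neg C)^\mathcal{I} = \Delta^\mathcal{I}\setminus C^\mathcal{I}$, is equally immediate but is not needed for the compatibility claim.
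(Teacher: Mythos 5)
Your proof is correct and follows essentially the same route as the paper's: both use the two hypotheses (disjointness and covering) together with the set-theoretical semantics of $\sqcap$ and $\sqcup$ applied to the quantified concept $X$, and then argue by element-wise membership. Your only cosmetic difference is that you first package the hypotheses into the identity $(\neg C)^\mathcal{I}=\Delta^\mathcal{I}\setminus C^\mathcal{I}$ before chasing elements, whereas the paper chases elements directly and cites Lemmas~\ref{lem:conj} and \ref{lem:disj} for the replacements $(C\sqcap X)^\mathcal{I}=C^\mathcal{I}\cap X^\mathcal{I}$ and $(C\sqcup X)^\mathcal{I}=C^\mathcal{I}\cup X^\mathcal{I}$.
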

Note that the properties  $C^\mathcal{I} \cap \neg C^\mathcal{I}\subseteq \bot^\mathcal{I}$ and $\top^\mathcal{I} \subseteq  C^\mathcal{I} \cup \neg C^\mathcal{I}$ suffice to characterize the semantics of negation  under the set-theoretical semantics.

\begin{proof}
 
\noindent 
Let $x\in X^\mathcal{I}$ with $(C\sqcap X)^\mathcal{I}\subseteq \bot^\mathcal{I}$.  Due to Lemma~\ref{lem:conj}, we have $(C\sqcap X)^\mathcal{I}=C^\mathcal{I} \cap X^\mathcal{I}$. Therefore, $x\notin C^\mathcal{I}$, and thus $x\in (\neg C)^\mathcal{I}$. Let $x\in (\neg C)^\mathcal{I}$ with $\top^\mathcal{I} \subseteq (C\sqcup X)^\mathcal{I} $. Hence, $x\notin C^\mathcal{I}$. Due to Lemma~\ref{lem:disj}, we have $(C\sqcup X)^\mathcal{I}=C^\mathcal{I} \cup X^\mathcal{I}$. Therefore,   $x\in X^\mathcal{I}$.\hfill$\square$
\end{proof}

Thanks to Properties~(\ref{neg-bot}-\ref{neg-min}), we obtain De Morgan's laws and other properties for category-theoretical semantics as follows.

\begin{lemma}\label{lem:neg-prop} The following arrows hold.
\begin{align}
&C \longleftrightarrows  \neg \neg C \label{neg-double}\\ 
&C\longrightarrow \neg D \Longrightarrow   D\longrightarrow \neg C \label{neg-dual}\\
&C \sqcap D \longrightarrow \bot \Longleftrightarrow C\longrightarrow \neg D, D\longrightarrow \neg C\label{neg-disjoint}\\
&\neg (C \sqcap D) \longleftrightarrows   \neg C \sqcup  \neg D \label{neg-conj}\\
&\neg (C \sqcup D) \longleftrightarrows   \neg C \sqcap  \neg D \label{neg-disj}
\end{align}
\end{lemma}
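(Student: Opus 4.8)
The plan is to work entirely inside $\mathscr{C}_c\langle C_0,\mathcal{O}\rangle$, using only the universal properties (\ref{disj01})--(\ref{disj02}), (\ref{conj01})--(\ref{conj2}) and (\ref{neg-bot})--(\ref{neg-min}), together with transitivity of arrows and the initial/terminal objects $\bot,\top$. Since for these statements we only ever need the \emph{existence} of an arrow, I treat $\longrightarrow$ as a preorder on objects and $\longleftrightarrows$ as the induced equivalence, and I assume every composite object that occurs (such as $D\sqcap\neg D$, $\neg C\sqcup\neg D$, $C\sqcap(D\sqcup\neg D)$, $(C\sqcap D)\sqcup(\neg C\sqcup\neg D)$) is present in the category. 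The first step, call it the bookkeeping step, is a batch of two-line diagram chases: commutativity, associativity and idempotence of $\sqcap$ and $\sqcup$; the unit/zero laws $C\sqcap\top\longleftrightarrows C$, $C\sqcup\bot\longleftrightarrows C$, $C\sqcap\bot\longleftrightarrows\bot$, $C\sqcup\top\longleftrightarrows\top$; monotonicity of $\sqcap$ and $\sqcup$ in each argument; the fact that $X\longrightarrow\bot$ forces $X\longleftrightarrows\bot$; and order-reversal of negation, $C\longrightarrow C'\Rightarrow\neg C'\longrightarrow\neg C$ (since $C\sqcap\neg C'\longrightarrow C'\sqcap\neg C'\longrightarrow\bot$ by (\ref{conj1}) and (\ref{neg-bot}), so (\ref{neg-max}) applies), whence $\neg$ preserves $\longleftrightarrows$.

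Next I would handle (\ref{neg-dual}), (\ref{neg-disjoint}) and (\ref{neg-double}), which need nothing beyond (\ref{conj01}), (\ref{conj1}), (\ref{neg-bot})--(\ref{neg-min}) and the bookkeeping step. For (\ref{neg-dual}): if $C\longrightarrow\neg D$ then $C\sqcap D\longrightarrow C\longrightarrow\neg D$ and $C\sqcap D\longrightarrow D$, so (\ref{conj1}) gives $C\sqcap D\longrightarrow D\sqcap\neg D$ and (\ref{neg-bot}) gives $C\sqcap D\longrightarrow\bot$, hence $D\longrightarrow\neg C$ by (\ref{neg-max}) with $X:=D$. The same computation is the ``$\Longleftarrow$'' direction of (\ref{neg-disjoint}); for ``$\Longrightarrow$'', from $C\sqcap D\longrightarrow\bot$ one applies (\ref{neg-max}) with $X:=D$ and, after commuting the conjunction, with $X:=C$. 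For (\ref{neg-double}): applying (\ref{neg-disjoint}) to the pair $(C,\neg C)$ and using $C\sqcap\neg C\longrightarrow\bot$ from (\ref{neg-bot}) gives $C\longrightarrow\neg\neg C$, while (\ref{neg-top}) gives $\top\longrightarrow C\sqcup\neg C\longleftrightarrows\neg C\sqcup C$, so (\ref{neg-min}) applied to $\neg C$ with $X:=C$ gives $\neg\neg C\longrightarrow C$.

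For the De Morgan laws (\ref{neg-conj}) and (\ref{neg-disj}), one inclusion of each is easy: from $C\sqcap D\longrightarrow C$ and $C\sqcap D\longrightarrow D$, order-reversal of $\neg$ and (\ref{disj02}) give $\neg C\sqcup\neg D\longrightarrow\neg(C\sqcap D)$; dually, (\ref{conj1}) gives $\neg(C\sqcup D)\longrightarrow\neg C\sqcap\neg D$. The opposite inclusions are where (\ref{conj2}) is indispensable. To obtain $\neg(C\sqcap D)\longrightarrow\neg C\sqcup\neg D$ I would first produce the arrow $\top\longrightarrow(C\sqcap D)\sqcup(\neg C\sqcup\neg D)$ and then invoke (\ref{neg-min}) on the object $C\sqcap D$ with $X:=\neg C\sqcup\neg D$. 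For that arrow: $C\longleftrightarrows C\sqcap\top\longleftrightarrows C\sqcap(D\sqcup\neg D)$ by the bookkeeping step and (\ref{neg-top}); distributivity (\ref{conj2}) then gives $C\longrightarrow(C\sqcap D)\sqcup(C\sqcap\neg D)$; since $C\sqcap\neg D\longrightarrow\neg D\longrightarrow\neg C\sqcup\neg D$, monotonicity of $\sqcup$ yields $C\longrightarrow(C\sqcap D)\sqcup(\neg C\sqcup\neg D)$; as $\neg C$ also maps into this object, (\ref{disj02}) and (\ref{neg-top}) give $\top\longrightarrow C\sqcup\neg C\longrightarrow(C\sqcap D)\sqcup(\neg C\sqcup\neg D)$. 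Finally (\ref{neg-disj}) follows either by the mirror argument --- distribute $(\neg C\sqcap\neg D)\sqcap(C\sqcup D)$ with (\ref{conj2}), note each summand maps into $C\sqcap\neg C$, resp.\ $D\sqcap\neg D$, hence into $\bot$, so their join does too, and apply (\ref{neg-max}) on $C\sqcup D$ --- or more cheaply by applying the just-proved (\ref{neg-conj}) to $(\neg C,\neg D)$ and simplifying with (\ref{neg-double}) and order-reversal of $\neg$.

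The step I expect to be the real obstacle is exactly this last one. As Example~\ref{ex:distrib} already hints, one cannot manufacture $\top\longrightarrow(C\sqcap D)\sqcup(\neg C\sqcup\neg D)$ (resp.\ $(C\sqcup D)\sqcap(\neg C\sqcap\neg D)\longrightarrow\bot$) from (\ref{neg-top})/(\ref{neg-bot}) alone with the weak conjunction of Figure~\ref{fig:conj}; the distributivity arrow (\ref{conj2}) is precisely what makes that derivation go through. Everything else is routine preorder manipulation, the only persistent nuisance being to keep commutativity and associativity of $\sqcap$ and $\sqcup$ in sight so that each universal property is applied with the right object in the right slot.
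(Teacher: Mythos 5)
Your proofs of (\ref{neg-double}), (\ref{neg-dual}) and (\ref{neg-disjoint}), and of the ``easy'' halves of the De Morgan laws, are exactly the paper's argument (same instantiations of (\ref{neg-max})/(\ref{neg-min}), same use of the (co)product properties). Where you genuinely diverge is on the hard De Morgan inclusions: the paper never touches distributivity there. It derives $\neg(\neg C\sqcup\neg D)\longrightarrow C$ and $\neg(\neg C\sqcup\neg D)\longrightarrow D$ by contravariance of $\neg$ (via (\ref{neg-dual}) and (\ref{neg-double})) from $\neg C,\neg D\longrightarrow\neg C\sqcup\neg D$, pairs them with (\ref{conj1}) to get $\neg(\neg C\sqcup\neg D)\longrightarrow C\sqcap D$, and then negates once more and cancels the double negation to obtain $\neg(C\sqcap D)\longrightarrow\neg C\sqcup\neg D$; the dual chase gives (\ref{neg-disj}). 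Your route instead manufactures the excluded-middle-style arrow $\top\longrightarrow(C\sqcap D)\sqcup(\neg C\sqcup\neg D)$ using (\ref{conj2}) and then applies (\ref{neg-min}); this is correct (every step checks out, granting your standing assumption that the auxiliary composite objects are present, which matches how the paper's definitions are phrased), but it buys nothing over the paper's shorter Galois-style argument and uses a strictly stronger tool. Consequently your closing claim that (\ref{conj2}) is \emph{indispensable} for these inclusions is wrong: with the full negation axioms (\ref{neg-bot})--(\ref{neg-min}), the De Morgan laws already follow without distributivity, as the paper's proof shows; what Example~\ref{ex:distrib} demonstrates is only that the weak conjunction of Figure~\ref{fig:conj} does not entail (\ref{conj2}), not that Lemma~\ref{lem:neg-prop} needs it. This mis-statement does not invalidate your derivation, but you should either drop it or recast it as ``(\ref{conj2}) is what my particular derivation uses,'' and note that your ``cheap'' deduction of (\ref{neg-disj}) from (\ref{neg-conj}) via (\ref{neg-double}) and order-reversal is essentially the paper's mechanism applied one level up.
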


\begin{proof}
\begin{enumerate}[wide, labelwidth=!, labelindent=0pt]
\item By Properties~(\ref{neg-bot}) and (\ref{neg-max})  where $C$ gets $\neg C$ and $X$ gets $C$, we have $C\longrightarrow \neg \neg C$. Analogously,  by Properties~(\ref{neg-top}) and (\ref{neg-min}), we obtain  $\neg \neg C \longrightarrow C$.  Hence, (\ref{neg-double}) is proved. 
\item To prove (\ref{neg-dual}), we need $C\sqcap D\longrightarrow \bot$ which follows from $C\sqcap D \longrightarrow C\longrightarrow \neg D$ and $C\sqcap D \longrightarrow D$, and thus   $C\sqcap D \longrightarrow D\sqcap \neg D\longrightarrow \bot$.

\item To prove (\ref{neg-disjoint}), we start by proving    $C\longrightarrow \neg D  \Longrightarrow C \sqcap D \longrightarrow \bot$. We have $C\sqcap D \longrightarrow D$ and $C\sqcap D \longrightarrow C \longrightarrow \neg D$. By definition of conjunction, we obtain $C\sqcap D \longrightarrow D \sqcap \neg D\longrightarrow \bot$. To prove the other direction, we use (\ref{neg-max}) with $X=D$ or $X=C$.

\item To prove (\ref{neg-conj}), we need the definitions of conjunction, disjunction and negation. We have $C\sqcap D\longrightarrow C$ and $C\sqcap D\longrightarrow D$. Due to (\ref{neg-dual}), we obtain $\neg C \longrightarrow \neg(C\sqcap D)$ and $\neg D \longrightarrow \neg(C\sqcap D)$. By the definition of disjunction, we obtain $\neg C \sqcup \neg D \longrightarrow \neg(C \sqcap D)$. To prove the inverse, we take arrows $\neg C \longrightarrow \neg C  \sqcup  \neg D$ and $\neg D \longrightarrow \neg C  \sqcup  \neg D$ from the definition of disjunction. Due to (\ref{neg-dual}), it follows that $\neg (\neg C  \sqcup  \neg D) \longrightarrow C$ and $\neg (\neg C  \sqcup  \neg D) \longrightarrow D$. By Definition~\ref{def:conj}, we obtain $\neg(\neg C \sqcup \neg D) \longrightarrow  C \sqcap D$. Due to (\ref{neg-dual}) and (\ref{neg-double}), it follows that $\neg(C \sqcap D) \longrightarrow  \neg C \sqcup \neg D$.

\item Analogously, we can prove (\ref{neg-disj}) by starting with arrows  $\neg C  \sqcap  \neg D  \longrightarrow \neg C$ and $\neg C  \sqcap  \neg D \longrightarrow  \neg D$ from the definition of conjunction. Due to (\ref{neg-dual}), we have $C\longrightarrow \neg(\neg C  \sqcap  \neg D)$ and $D \longrightarrow  \neg(\neg C  \sqcap  \neg D)$. By the definition of disjunction, we have $C \sqcup D \longrightarrow \neg(\neg C  \sqcap  \neg D)$, and by (\ref{neg-dual}) we obtain  $(\neg C  \sqcap  \neg D) \longrightarrow \neg(C \sqcup D)$. To prove the inverse, we take arrows $C \longrightarrow C  \sqcup  D$ and $D \longrightarrow C  \sqcup  D$ obtained from the definition of disjunction. Due to  (\ref{neg-dual}), we have $\neg(C  \sqcup  D) \longrightarrow \neg C$ and $\neg(C  \sqcup  D) \longrightarrow  \neg D$. By Definition~\ref{def:conj}, we have $\neg(C  \sqcup  D) \longrightarrow \neg C \sqcap \neg D$. \hfill$\square$
\end{enumerate}
\end{proof}
In order to define category-theoretical semantics of existential restrictions, we need to introduce new objects and arrows  to $\mathscr{C}_c$ and  $\mathscr{C}_r$ as described in Figure~\ref{fig:exist}.

\begin{figure}[!htbp]
\centering
\begin{tikzpicture}[line width=1pt,scale=1] 
\node[] (C) at (0,-1) [label={[xshift=-0.3cm, yshift=-0.6cm]}] {$C$};
\node[] (codRP) at (-2,-1) [label={[xshift=-0.3cm, yshift=-0.6cm]}] {$\mathsf{cod}(R')$};
\node[] (codRE) at (2,-1) [label={[xshift=-0.3cm, yshift=-0.6cm]}] {$\mathsf{cod}(R_{(\exists R.C)})$};

\node[] (domRP) at (-2,-2.5) [label={[xshift=-0.3cm, yshift=-0.6cm]}] {$\mathsf{dom}(R')$};
\node[] (domRE) at (2,-2.5) [label={[xshift=-0.3cm, yshift=-0.6cm]}] {$\mathsf{dom}(R_{(\exists R.C)})$};

\node[] (RP) at (-2,1) [label={[xshift=-0.3cm, yshift=-0.6cm]}] {$R'$};
\node[] (RE) at (2,1) [label={[xshift=-0.3cm, yshift=-0.6cm]}] {$R_{(\exists R.C)}$};

\node[] (R) at (0,1) [label={[xshift=-0.3cm, yshift=-0.6cm]}] {$R$};

 \node[] (X) at (0,1) [label={[xshift=-0.3cm, yshift=-0.6cm]}] { };

 \draw[->] (codRP) to[]node[below=2pt]{${i}$} (C);
 
 \draw[->] (codRE) to[]node[below=2pt]{${j}$} (C);
 \draw[dashed,->] (RP) to[]node[right=2pt]{$\mathsf{cod}$} (codRP);
 
 \draw[dashed,->] (RE) to[]node[left=2pt]{$\mathsf{cod} $} (codRE);
 
 \draw[->] (RP) to[]node[above=2pt]{${k}$} (X); 
 \draw[->] (RE) to[]node[above=2pt]{${l}$} (X); 
  
 \path [->,dotted,out=210,in=135] (RP) edge node[left]{$\mathsf{dom}$} (domRP);
  \path [->,dotted,out=-30,in=45] (RE) edge node[right]{$\mathsf{dom}$} (domRE);
 \draw[->] (domRP) to[]node[above=2pt]{$m$} (domRE);
\end{tikzpicture}
\caption{Commutative diagram for existential restriction}\label{fig:exist}
\end{figure}
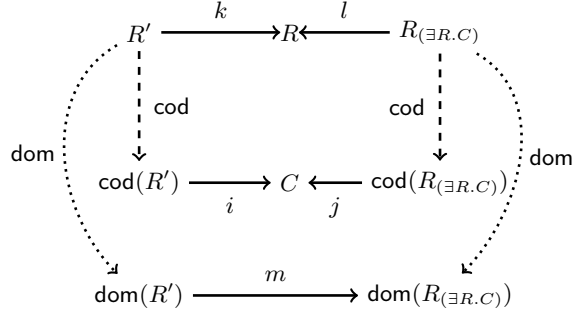

\begin{definition}[Category-theoretical semantics of  existential restriction]\label{def:exist}
Let $\exists R.C, C$ be objects of $\mathscr{C}_c$, and    
$R, R_{(\exists R.C)}$ be objects of  $\mathscr{C}_r$.
Category-theoretical semantics of $\exists R.C$ is defined by using arrows in $\mathscr{C}_c$ and $\mathscr{C}_r$ as follows. There is an arrow $j$ from $\mathsf{cod}(R_{(\exists R.C)})$ to $C$   in $\mathscr{C}_c$, an arrow $l$ from $R_{(\exists R.C)}$ to $R$ in $\mathscr{C}_r$, and if  there is an object  $R'$ of $\mathscr{C}_r$, an arrow $k$ from $R'$ to $R$ in  $\mathscr{C}_r$, and an arrow $i$ from $\mathsf{cod}(R')$ to $C$ in $\mathscr{C}_c$, then  there is an arrow $m$ in $\mathscr{C}_c$  such that the diagram in  Figure~\ref{fig:exist} commutes.

The diagram in Figure~\ref{fig:exist} can be rephrased  as follows:
\begin{align}
&R_{(\exists R.C)}  \longrightarrow R, \mathsf{cod}(R_{(\exists R.C)})  \longrightarrow C\label{ex-arrow01}\\ &\mathsf{dom}(R_{(\exists R.C)})\longleftrightarrows \exists R.C \label{ex-arrow02}\\
&\forall R', R'\longrightarrow R, \mathsf{cod}(R') \longrightarrow C\Longrightarrow  \mathsf{dom}(R')\longrightarrow \mathsf{dom}(R_{(\exists R.C)})  \label{ex-arrow03} 
\end{align}
\end{definition}

Since $\mathsf{dom}(R_{(\exists R.C)})\longleftrightarrows \exists R.C$ by (\ref{ex-arrow02}), the objects $\mathsf{dom}(R_{(\exists R.C)})$ and $\exists R.C$ are mutually  replaceable in the category context. 

\begin{lemma}\label{lem:exist}
The category-theorectical semantics of $\exists R.C$ characterized by Definition~\ref{def:exist} is compatible with the set-theoretical semantics of $\exists R.C$, that means
if $\langle \Delta^\mathcal{I}, \cdot^\mathcal{I}\rangle$ is an interpretation      under set-theoretical semantics such that  
\begin{align}
&R_{(\exists R.C)}^\mathcal{I} \subseteq R^\mathcal{I} \label{ex-arrow1}\\
&\mathsf{cod}(R_{(\exists R.C)})^\mathcal{I} \subseteq C^\mathcal{I} \label{ex-arrow2}
\end{align}
\noindent then the following holds:

  $\mathsf{dom}(R_{(\exists R.C)})^\mathcal{I}=\{x\in \Delta^\mathcal{I}\mid \exists y\in \Delta^\mathcal{I} : \langle x,y\rangle\in R^\mathcal{I} \wedge y\in C^\mathcal{I}\}$ iff
\begin{align}
&\forall R'\subseteq \Delta^\mathcal{I}\times \Delta^\mathcal{I}, {R'}\subseteq R^\mathcal{I}, \mathsf{cod}(R') \subseteq C^\mathcal{I}\Longrightarrow  \mathsf{dom}(R')\subseteq \mathsf{dom}(R_{(\exists R.C)})^\mathcal{I}   \label{ex-arrow3}
\end{align}
\end{lemma}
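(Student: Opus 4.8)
The plan is to prove both directions of the ``iff'' by unwinding the set-theoretical meaning of the relevant objects, exactly as was done in Lemmas~\ref{lem:disj} and \ref{lem:conj}. Write $E := \{x\in\Delta^\mathcal{I}\mid \exists y\in\Delta^\mathcal{I}: \langle x,y\rangle\in R^\mathcal{I}\wedge y\in C^\mathcal{I}\}$ for the ``intended'' extension of $\exists R.C$. Note that throughout we are allowed to interpret the fresh symbols $R_{(\exists R.C)}$, $\mathsf{dom}(R_{(\exists R.C)})$, $\mathsf{cod}(R_{(\exists R.C)})$ however we like, subject only to the constraints~(\ref{ex-arrow1}) and~(\ref{ex-arrow2}); moreover $\mathsf{dom}$ and $\mathsf{cod}$ of a binary relation $R'$ mean its projections onto the first and second coordinate. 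The key observation making the argument work is that the witnessing relation in the right-hand side is the restriction of $R^\mathcal{I}$ to pairs whose second component lies in $C^\mathcal{I}$.

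For the ``$\Longrightarrow$'' direction, assume $\mathsf{dom}(R_{(\exists R.C)})^\mathcal{I}=E$. Take any $R'\subseteq\Delta^\mathcal{I}\times\Delta^\mathcal{I}$ with $R'\subseteq R^\mathcal{I}$ and $\mathsf{cod}(R')\subseteq C^\mathcal{I}$, and let $x\in\mathsf{dom}(R')$. Then there is $y$ with $\langle x,y\rangle\in R'$, hence $\langle x,y\rangle\in R^\mathcal{I}$, and $y\in\mathsf{cod}(R')\subseteq C^\mathcal{I}$; so $x\in E=\mathsf{dom}(R_{(\exists R.C)})^\mathcal{I}$, which gives~(\ref{ex-arrow3}).

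For the ``$\Longleftarrow$'' direction, assume~(\ref{ex-arrow3}). First show $\mathsf{dom}(R_{(\exists R.C)})^\mathcal{I}\subseteq E$: if $x\in\mathsf{dom}(R_{(\exists R.C)})^\mathcal{I}$ then there is $y$ with $\langle x,y\rangle\in R_{(\exists R.C)}^\mathcal{I}$, hence $\langle x,y\rangle\in R^\mathcal{I}$ by~(\ref{ex-arrow1}), and $y\in\mathsf{cod}(R_{(\exists R.C)})^\mathcal{I}\subseteq C^\mathcal{I}$ by~(\ref{ex-arrow2}), so $x\in E$. For the reverse inclusion, instantiate~(\ref{ex-arrow3}) with the particular relation $R' := \{\langle x,y\rangle\in R^\mathcal{I}\mid y\in C^\mathcal{I}\}$. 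This $R'$ satisfies $R'\subseteq R^\mathcal{I}$ by construction, and $\mathsf{cod}(R')\subseteq C^\mathcal{I}$ since every pair in $R'$ has second component in $C^\mathcal{I}$; hence~(\ref{ex-arrow3}) yields $\mathsf{dom}(R')\subseteq\mathsf{dom}(R_{(\exists R.C)})^\mathcal{I}$. But $\mathsf{dom}(R')=E$ exactly by the definition of $E$, so $E\subseteq\mathsf{dom}(R_{(\exists R.C)})^\mathcal{I}$, completing the equality.

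The only delicate point — and the step I would flag as the ``main obstacle'' — is the choice of the universally quantified witness $R'$ in the $\Longleftarrow$ direction: one has to check that $\{\langle x,y\rangle\in R^\mathcal{I}\mid y\in C^\mathcal{I}\}$ genuinely has its codomain (second-coordinate projection) contained in $C^\mathcal{I}$ rather than equal to it, and that it is indeed a subrelation of $R^\mathcal{I}$; both are immediate but it is where the role-object interpretation of $\mathsf{dom}$ and $\mathsf{cod}$ as projections must be used honestly. Everything else is a routine membership chase mirroring the earlier compatibility lemmas, so no further machinery is needed.
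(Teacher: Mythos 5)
Your proof is correct and takes essentially the same approach as the paper: both directions are a direct membership chase in which the universally quantified $R'$ of~(\ref{ex-arrow3}) is instantiated by a subrelation of $R^\mathcal{I}$ whose codomain lies in $C^\mathcal{I}$, with $\mathsf{dom}$ and $\mathsf{cod}$ read as projections exactly as the paper does. The only immaterial difference is that you use the single maximal witness $R'=\{\langle x,y\rangle\in R^\mathcal{I}\mid y\in C^\mathcal{I}\}$ once, whereas the paper picks, for each element of the intended extension, a small relation containing one witnessing pair.
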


\begin{proof}
 \noindent ``$\Longleftarrow$". 
Let $x'\in \mathsf{dom}(R_{(\exists R.C)})^\mathcal{I}$.   There is an element $y\in   \mathsf{cod}(R_{(\exists R.C)})^\mathcal{I}$ such that $\langle x',y\rangle \in R_{(\exists R.C)}^\mathcal{I}$ by definition. Due to (\ref{ex-arrow2}),   we have $y\in   C^\mathcal{I}$. Analogously, due to (\ref{ex-arrow1}),     we have $\langle x',y\rangle \in R^\mathcal{I}$. Thus, $x'\in  \{x\in \Delta^\mathcal{I}\mid \exists y\in \Delta^\mathcal{I} : \langle x,y\rangle\in R^\mathcal{I} \wedge y\in C^\mathcal{I}\}$.

  Let $x'\in  \{x\in \Delta^\mathcal{I}\mid \exists y\in \Delta^\mathcal{I} : \langle x,y\rangle\in R^\mathcal{I} \wedge y\in C^\mathcal{I}\}$. By the set-theoretical semantics, there is an element $y\in   C^\mathcal{I}$ such that $\langle x',y\rangle \in R^\mathcal{I}$.  Take an $R'\subseteq \Delta^\mathcal{I} \times \Delta^\mathcal{I}$  such that $R'\subseteq R^\mathcal{I}$,  $\mathsf{cod}(R') \subseteq C^\mathcal{I}$ with   $x'\in \mathsf{dom}(R')$ and  $y\in \mathsf{cod}(R')$.   Due to (\ref{ex-arrow3}), we have $x'\in \mathsf{dom}(R_{(\exists R.C)})^\mathcal{I}$.
  
  \noindent ``$\Longrightarrow$". Let $R'\subseteq \Delta^\mathcal{I} \times \Delta^\mathcal{I}$   such that $R'\subseteq R^\mathcal{I}$, $\mathsf{cod}(R') \subseteq C^\mathcal{I}$. Let $x'\in \mathsf{dom}(R')$. This implies that there is some $y\in \mathsf{cod}(R')$ such that $\langle x',y\rangle\in  R'$. Hence, $y\in C^\mathcal{I}$ and $\langle x',y\rangle\in  R^\mathcal{I}$. Therefore, $x'\in  \{x\in \Delta^\mathcal{I}\mid \exists y\in \Delta^\mathcal{I} : \langle x,y\rangle\in R^\mathcal{I} \wedge y\in C^\mathcal{I}\}$ $=$ $\mathsf{dom}(R_{(\exists R.C)})^\mathcal{I}$.\hfill$\square$
\end{proof}

\begin{lemma}\label{lem:exist-prop} The following properties hold:
\begin{align}
&C \longrightarrow \bot \Longrightarrow \exists R.C  \longrightarrow \bot \label{exist-empty}\\
&C \longrightarrow D \Longrightarrow \exists R.C  \longrightarrow \exists R.D  \label{exist-sub}
\end{align}
\end{lemma}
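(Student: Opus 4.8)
The plan is to derive both implications directly from Definition~\ref{def:exist}, the functoriality clauses of Definition~\ref{def:syntax-cat}(4), and transitivity of arrows (Definition~\ref{def:syntax-cat}(3)), always passing between $\exists R.C$ and $\mathsf{dom}(R_{(\exists R.C)})$ via the two-sided arrow~(\ref{ex-arrow02}).

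For~(\ref{exist-empty}), I would start from the hypothesis $C\longrightarrow\bot$ and compose it with the arrow $\mathsf{cod}(R_{(\exists R.C)})\longrightarrow C$ supplied by~(\ref{ex-arrow01}), obtaining $\mathsf{cod}(R_{(\exists R.C)})\longrightarrow\bot$ by transitivity. I then invoke Definition~\ref{def:syntax-cat}(4d), which turns an arrow $\mathsf{cod}(R)\longrightarrow\bot$ into an arrow $R_{(\exists R.C)}\longrightarrow R_\bot$ in $\mathscr{C}_r$. Forming the length-two chain $R_{(\exists R.C)}\longrightarrow R_{(\exists R.C)}\longrightarrow R_\bot$ with the identity arrow and applying Definition~\ref{def:syntax-cat}(4c) together with $\mathsf{dom}(R_\bot)=\bot$ from (4a) yields $\mathsf{dom}(R_{(\exists R.C)})\longrightarrow\bot$; then~(\ref{ex-arrow02}) gives $\exists R.C\longrightarrow\bot$.

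For~(\ref{exist-sub}), assuming $\exists R.D$ is an object of the category so that its categorical semantics (in particular the universal property~(\ref{ex-arrow03}) for $\exists R.D$) is available, I would apply~(\ref{ex-arrow03}) with the test role object $R':=R_{(\exists R.C)}$. Its two hypotheses hold: $R_{(\exists R.C)}\longrightarrow R$ is exactly the first arrow of~(\ref{ex-arrow01}), and $\mathsf{cod}(R_{(\exists R.C)})\longrightarrow D$ follows by composing $\mathsf{cod}(R_{(\exists R.C)})\longrightarrow C$ from~(\ref{ex-arrow01}) with the hypothesis $C\longrightarrow D$. Hence~(\ref{ex-arrow03}) produces $\mathsf{dom}(R_{(\exists R.C)})\longrightarrow\mathsf{dom}(R_{(\exists R.D)})$, and pre- and post-composing with the two-sided arrows of~(\ref{ex-arrow02}) gives $\exists R.C\longleftrightarrows\mathsf{dom}(R_{(\exists R.C)})\longrightarrow\mathsf{dom}(R_{(\exists R.D)})\longleftrightarrows\exists R.D$, i.e.\ $\exists R.C\longrightarrow\exists R.D$.

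I expect the only subtle point to be in~(\ref{exist-empty}): one must check that Definition~\ref{def:syntax-cat}(4d) indeed applies to the specific role object $R_{(\exists R.C)}$ (it is phrased for an arbitrary object $R$, so this is fine) and that a single arrow $R_{(\exists R.C)}\longrightarrow R_\bot$ may be promoted to a two-step chain through the identity in order to feed Definition~\ref{def:syntax-cat}(4c). The implication~(\ref{exist-sub}) is then a routine instantiation of the universal property of $\exists R.D$ and should present no difficulty.
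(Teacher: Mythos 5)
Your proposal is correct and follows essentially the same route as the paper: for~(\ref{exist-empty}) it composes $\mathsf{cod}(R_{(\exists R.C)})\longrightarrow C\longrightarrow\bot$, uses Definition~\ref{def:syntax-cat}(4d) to get $R_{(\exists R.C)}\longrightarrow R_\bot$, pushes this through $\mathsf{dom}$ to get $\mathsf{dom}(R_{(\exists R.C)})\longrightarrow\bot$, and concludes via~(\ref{ex-arrow02}); for~(\ref{exist-sub}) it instantiates the universal property~(\ref{ex-arrow03}) of $\exists R.D$ with $R'=R_{(\exists R.C)}$, exactly as the paper does. Your version merely spells out more explicitly (identity chain plus (4a) and (4c)) the step the paper abbreviates as ``due Definition of $\mathsf{dom}$.''
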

\begin{proof}
\begin{enumerate}
    \item We have $\mathsf{cod}(R_{(\exists R.C)})\longrightarrow C$ by definition. By hypothesis $C\longrightarrow \bot$, we have $\mathsf{cod}(R_{(\exists R.C)})\longrightarrow \bot$ due Definition~\ref{def:syntax-cat} of $\mathsf{cod}$. Again,  due Definition~\ref{def:syntax-cat} of $\mathsf{cod}$, we have  $R_{(\exists R.C)} \longrightarrow R_\bot$. Moreover, we have $\mathsf{dom}(R_{(\exists R.C)})  \longrightarrow \bot$  due Definition~\ref{def:syntax-cat} of $\mathsf{dom}$. By Definition~\ref{def:exist}, we obtain $\exists R.C\longrightarrow \bot$.
    
    \item To prove (\ref{exist-sub}) we consider two objects $R_{(\exists R.C)}$ and $R_{(\exists R.D)}$ in $\mathscr{C}_r$. We have $R_{(\exists R.C)}\longrightarrow R$ and  $\mathsf{cod}(R_{(\exists R.C)})\longrightarrow C \longrightarrow D$. By definition, we obtain $\mathsf{dom}(R_{(\exists R.C)})\longrightarrow \mathsf{dom}(R_{(\exists R.D)})$.  
    \hfill$\square$
\end{enumerate}
\end{proof}

The set-theoretical semantics of universal restriction can be defined as  negation of existential restriction. However, the definition $\forall R.C \longleftrightarrows  \neg  \exists R.\neg C$ does not allow to obtain  usual connections between existential and universal restrictions such as $\exists R.D \sqcap \forall R.C \longrightarrow  \exists R.(D \sqcap  C)$  under  category-theoretical semantics.   Therefore, we need more arrows to define category-theoretical semantics of universal restriction as follows.

\begin{definition}[Category-theoretical semantics of universal restriction]\label{def:forall}
Let $\forall R.C$, $C$  be objects of $\mathscr{C}_c$,  and    
$R$ be an object of  $\mathscr{C}_r$.
   Category-theoretical semantics of $\forall R.C$ is defined by using arrows in $\mathscr{C}_c$ and $\mathscr{C}_r$ as follows.
\begin{align}
&\forall R.C \longleftrightarrows  \neg  \exists R.\neg C\label{all-arrow1} \\
&\forall R', R'\longrightarrow R, \mathsf{dom}(R') \longrightarrow \forall R.C \Longrightarrow \mathsf{cod}(R')\longrightarrow C \label{all-arrow2}  
\end{align}
\end{definition}

\begin{lemma}\label{lem:forall}
The category-theoretical semantics of $\forall R.C$ characterized by Definition~\ref{def:forall}  is compatible with the set-theoretical semantics of $\forall R.C$, that means
if $\langle \Delta^\mathcal{I}, \cdot^\mathcal{I}\rangle$   is an interpretation   under set-theoretical semantics, then the following holds:   
  
  $\forall R.C^\mathcal{I}=\{x\in \Delta^\mathcal{I}\mid  \langle x,y\rangle\in R^\mathcal{I}\Longrightarrow y\in C^\mathcal{I}\}$ iff
\begin{align}
&\forall R.C^\mathcal{I}= (\neg  \exists R.\neg C)^\mathcal{I} \label{all-arrow4}\\
&\forall R'\subseteq \Delta^\mathcal{I}\times \Delta^\mathcal{I}, R' \subseteq R^\mathcal{I}, \mathsf{dom}(R')  \subseteq  \forall R.C^\mathcal{I}   \Longrightarrow  \mathsf{cod}(R')^\mathcal{I}\subseteq C^\mathcal{I}  \label{all-arrow5}
\end{align}
\end{lemma}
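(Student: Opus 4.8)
The plan is to split the biconditional into its two directions, exactly in the style of the proofs of Lemmas~\ref{lem:disj}, \ref{lem:conj} and, most closely, \ref{lem:exist}. The one computational fact that does all the work is the First-Order equivalence $\neg\exists y(\varphi\wedge\neg\psi)\equiv\forall y(\varphi\rightarrow\psi)$, which turns the ``$\neg\exists$'' shape of Arrow~(\ref{all-arrow1}) into the ``$\forall$'' shape of the usual semantics of $\forall R.C$. I would first record, invoking Lemmas~\ref{lem:neg} and \ref{lem:exist}, that in the interpretation $\langle\Delta^\mathcal{I},\cdot^\mathcal{I}\rangle$ the operators $\neg$ and $\exists R.\neg C$ already carry their set-theoretical meaning, namely $(\neg C)^\mathcal{I}=\Delta^\mathcal{I}\setminus C^\mathcal{I}$ and $(\exists R.\neg C)^\mathcal{I}=\{x\mid\exists y,\langle x,y\rangle\in R^\mathcal{I}\wedge y\notin C^\mathcal{I}\}$; and that for a binary relation $R'\subseteq\Delta^\mathcal{I}\times\Delta^\mathcal{I}$ the symbols $\mathsf{dom}(R')$ and $\mathsf{cod}(R')$ denote its first and second projections, consistently with the way $\mathsf{dom},\mathsf{cod}$ are used in Lemma~\ref{lem:exist}.

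For ``$\Longrightarrow$'' I would assume $(\forall R.C)^\mathcal{I}=\{x\mid\langle x,y\rangle\in R^\mathcal{I}\Rightarrow y\in C^\mathcal{I}\}$ and derive (\ref{all-arrow4}) and (\ref{all-arrow5}). Property~(\ref{all-arrow4}) is then literally the above FOL equivalence applied to the set-theoretical readings of $\neg$ and $\exists$: $(\neg\exists R.\neg C)^\mathcal{I}=\Delta^\mathcal{I}\setminus\{x\mid\exists y,\langle x,y\rangle\in R^\mathcal{I}\wedge y\notin C^\mathcal{I}\}=\{x\mid\forall y,\langle x,y\rangle\in R^\mathcal{I}\Rightarrow y\in C^\mathcal{I}\}=(\forall R.C)^\mathcal{I}$. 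For (\ref{all-arrow5}) I would take $R'\subseteq R^\mathcal{I}$ with $\mathsf{dom}(R')\subseteq(\forall R.C)^\mathcal{I}$ and an arbitrary $y\in\mathsf{cod}(R')$; picking $x$ with $\langle x,y\rangle\in R'$ gives $x\in\mathsf{dom}(R')\subseteq(\forall R.C)^\mathcal{I}$ and $\langle x,y\rangle\in R^\mathcal{I}$, whence $y\in C^\mathcal{I}$, so $\mathsf{cod}(R')\subseteq C^\mathcal{I}$.

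For ``$\Longleftarrow$'' I would assume (\ref{all-arrow4}) and (\ref{all-arrow5}) and prove the two inclusions of the target equality. The inclusion ``$\subseteq$'' uses (\ref{all-arrow5}) with the singleton relation $R'=\{\langle x,y\rangle\}$: if $x\in(\forall R.C)^\mathcal{I}$ and $\langle x,y\rangle\in R^\mathcal{I}$, then $R'\subseteq R^\mathcal{I}$ and $\mathsf{dom}(R')=\{x\}\subseteq(\forall R.C)^\mathcal{I}$, hence $\mathsf{cod}(R')=\{y\}\subseteq C^\mathcal{I}$, i.e.\ $y\in C^\mathcal{I}$. The inclusion ``$\supseteq$'' uses (\ref{all-arrow4}) together with the set-theoretical readings of $\neg$ and $\exists R.\neg C$: these give $(\forall R.C)^\mathcal{I}=(\neg\exists R.\neg C)^\mathcal{I}=\{x\mid\forall y,\langle x,y\rangle\in R^\mathcal{I}\Rightarrow y\in C^\mathcal{I}\}$, which already contains every $x$ satisfying the right-hand condition.

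I expect the only delicate point to be bookkeeping rather than mathematics. The ``interpretation under set-theoretical semantics'' must be one in which the already-treated constructors behave standardly, so that expanding $(\neg\exists R.\neg C)^\mathcal{I}$ is justified; this is precisely what Lemmas~\ref{lem:neg} and \ref{lem:exist} provide, and I would cite them explicitly whenever that expansion is used. It is also worth remarking that (\ref{all-arrow5}) is needed only for the ``$\subseteq$'' half and is, in the set-theoretical world, a consequence of (\ref{all-arrow4}); it is retained in the statement because in the category setting of Definition~\ref{def:forall} the analogous Arrow~(\ref{all-arrow2}) is \emph{not} derivable from Arrow~(\ref{all-arrow1}), which is the whole reason for adding it there.
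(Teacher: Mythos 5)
Your proposal is correct and follows essentially the same route as the paper's proof: the complement (FOL) equivalence handles~(\ref{all-arrow4}) in both directions, and your pointwise argument for~(\ref{all-arrow5}) (pick $y\in\mathsf{cod}(R')$, choose $x$ with $\langle x,y\rangle\in R'$) is exactly the paper's ``$\Longrightarrow$'' step. Your extra singleton-relation argument for the ``$\subseteq$'' inclusion in the converse direction is harmless but redundant, as you yourself note, since~(\ref{all-arrow4}) already yields the full equality, which is how the paper argues.
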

\begin{proof}
 \noindent ``$\Longleftarrow$". We have $\{x\in \Delta^\mathcal{I}\mid  \langle x,y\rangle\in R^\mathcal{I}\Longrightarrow y\in C^\mathcal{I}\}$ is the complement of  $\{x\in \Delta^\mathcal{I}\mid \exists y\in \Delta^\mathcal{I} : \langle x,y\rangle\in R^\mathcal{I}\wedge y\in \neg C^\mathcal{I}\} = \exists R.\neg C^\mathcal{I}$. Hence, we have  $\forall R.C^\mathcal{I}=\{x\in \Delta^\mathcal{I}\mid  \langle x,y\rangle\in R^\mathcal{I}\Longrightarrow y\in C^\mathcal{I}\}$ from (\ref{all-arrow4}).
 
 \noindent ``$\Longrightarrow$". Let  $R'\subseteq R^\mathcal{I}$ with $\mathsf{dom}(R') \subseteq \forall R.C^\mathcal{I}$. Let $y\in \mathsf{cod}(R')$. There is some $x'\in \mathsf{dom}(R')$ such that $\langle x', y\rangle \in  {R'}^\mathcal{I}$. Since $R'\subseteq R^\mathcal{I}$ and  $\mathsf{dom}(R') \subseteq \forall R.C^\mathcal{I}$, we have $x\in \forall R.C^\mathcal{I}$ and $\langle x', y\rangle \in  {R}^\mathcal{I}$. By hypothesis, $y\in C^\mathcal{I}$.\hfill$\square$
\end{proof}

\begin{lemma}\label{lem:forall-prop} The following properties hold. 
\begin{align}
&\forall R.C \sqcap \exists  R.\neg C \longrightarrow  \bot \label{forall-exists-bot}\\
&C \longrightarrow D \Longrightarrow \forall R.C \longrightarrow  \forall R.D  \label{forall-sub}\\
&\exists R.D \sqcap \forall R.C \longrightarrow  \exists R.(D \sqcap  C) \label{forall-exist}
\end{align}
\end{lemma}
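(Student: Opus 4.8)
The plan is to establish the three arrows separately. Arrows~(\ref{forall-exists-bot}) and (\ref{forall-sub}) should fall out of the definition (\ref{all-arrow1}) together with the negation laws of Lemma~\ref{lem:neg-prop} and existential monotonicity (\ref{exist-sub}); only (\ref{forall-exist}) genuinely needs the extra arrow (\ref{all-arrow2}), and it is the step I expect to cost real effort.

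For (\ref{forall-exists-bot}): by (\ref{all-arrow1}) the objects $\forall R.C$ and $\neg\exists R.\neg C$ are isomorphic, hence interchangeable, so with monotonicity of $\sqcap$ (immediate from Definition~\ref{def:conj}) we get $\forall R.C\sqcap\exists R.\neg C\longrightarrow\neg\exists R.\neg C\sqcap\exists R.\neg C$, and the latter maps to $\bot$ by (\ref{neg-bot}) with $\exists R.\neg C$ in the role of $C$ (after commuting the conjunction). For (\ref{forall-sub}): from $C\longrightarrow D$ and (\ref{neg-double}) we get $C\longrightarrow\neg\neg D$, hence $\neg D\longrightarrow\neg C$ by (\ref{neg-dual}); then (\ref{exist-sub}) gives $\exists R.\neg D\longrightarrow\exists R.\neg C$, and one more application of the same contravariance of negation gives $\neg\exists R.\neg C\longrightarrow\neg\exists R.\neg D$; rewriting both ends through (\ref{all-arrow1}) yields $\forall R.C\longrightarrow\forall R.D$.

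For (\ref{forall-exist}) the idea is to descend to a suitable subrole of $R$. Since $\exists R.D$ is an object, (\ref{ex-arrow02}) gives $\exists R.D\longleftrightarrows\mathsf{dom}(R_{(\exists R.D)})$, with $R_{(\exists R.D)}\longrightarrow R$ and $\mathsf{cod}(R_{(\exists R.D)})\longrightarrow D$. The plan is to exhibit a subrole $R'$ of $R$ --- morally the restriction of $R_{(\exists R.D)}$ to the part of its domain lying in $\forall R.C$ --- with $\mathsf{dom}(R')\longleftrightarrows\exists R.D\sqcap\forall R.C$ and $\mathsf{cod}(R')\longrightarrow D$. Granting such an $R'$: since $R'\longrightarrow R$ and $\mathsf{dom}(R')\longrightarrow\forall R.C$, arrow (\ref{all-arrow2}) gives $\mathsf{cod}(R')\longrightarrow C$; together with $\mathsf{cod}(R')\longrightarrow D$, Definition~\ref{def:conj} gives $\mathsf{cod}(R')\longrightarrow D\sqcap C$; then the universal property (\ref{ex-arrow03}) of $\exists R.(D\sqcap C)$ applied to $R'$ gives $\mathsf{dom}(R')\longrightarrow\mathsf{dom}(R_{(\exists R.(D\sqcap C))})\longleftrightarrows\exists R.(D\sqcap C)$, and chaining with $\exists R.D\sqcap\forall R.C\longleftrightarrows\mathsf{dom}(R')$ finishes the proof. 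A concept-level variant avoids naming $R'$: decompose $D\longleftrightarrows(D\sqcap C)\sqcup(D\sqcap\neg C)$ using distributivity (\ref{conj2}) and (\ref{neg-top}), apply the operator $X\mapsto\exists R.X$ to both sides, conjoin with $\forall R.C$, distribute again with (\ref{conj2}), and annihilate the $D\sqcap\neg C$ summand via (\ref{exist-sub}) and (\ref{forall-exists-bot}) so that $\exists R.D\sqcap\forall R.C$ lands in $\exists R.(D\sqcap C)\sqcup\bot\longleftrightarrows\exists R.(D\sqcap C)$.

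The hard part, in either variant, is the same: showing that the ambient ontology category actually contains the needed role object $R'$ --- equivalently, that $\exists R.(X\sqcup Y)$ maps to $\exists R.X\sqcup\exists R.Y$. This is exactly the property the naive definition $\forall R.C\longleftrightarrows\neg\exists R.\neg C$ fails to provide and that motivated strengthening Definition~\ref{def:forall} with (\ref{all-arrow2}); I expect the bulk of the argument, and its main subtlety, to lie in extracting $R'$ from the closure conditions on the syntax and ontology categories.
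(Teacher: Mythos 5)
Your treatments of (\ref{forall-exists-bot}) and (\ref{forall-sub}) match the paper's: the first is exactly the consequence of (\ref{all-arrow1}) and (\ref{neg-bot}) (with the harmless monotonicity of $\sqcap$ you invoke, which follows from (\ref{conj01}) and (\ref{conj1})), and the second is the same chain via (\ref{neg-dual}), (\ref{exist-sub}), (\ref{neg-dual}) again, and (\ref{all-arrow1}). For (\ref{forall-exist}) your skeleton is also the paper's argument: take a subrole $R'$ of $R_{(\exists R.D)}$ whose domain is $\exists R.D\sqcap\forall R.C$ up to $\longleftrightarrows$, get $\mathsf{cod}(R')\longrightarrow D$ through the $\mathsf{cod}$ functor and $\mathsf{cod}(R_{(\exists R.D)})\longrightarrow D$, get $\mathsf{cod}(R')\longrightarrow C$ from (\ref{all-arrow2}), combine by Definition~\ref{def:conj}, and conclude with the universal property (\ref{ex-arrow03}) of $\exists R.(D\sqcap C)$.

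The gap is that you prove (\ref{forall-exist}) only conditionally: you ``grant'' the existence of $R'$ and declare extracting it from the closure conditions to be the bulk of the work, without doing it. In the paper this step is closed by a definitional move, not by a derivation: one simply \emph{defines} a new role object $R_X$ of $\mathscr{C}_r$ together with the arrows $R_X\longrightarrow R_{(\exists R.D)}$ and $\mathsf{dom}(R_X)\longleftrightarrows \exists R.D\sqcap\forall R.C$, exactly as the auxiliary object $R_{(\exists R.C)}$ itself is introduced in Definition~\ref{def:exist}; the semantics definitions explicitly allow adding such objects and arrows to $\mathscr{C}_r\langle C_0,\mathcal{O}\rangle$, and this addition is not ``independent'' in the paper's sense because it has a set-theoretical counterpart (restrict $R_{(\exists R.D)}^\mathcal{I}$ to pairs whose first component lies in $(\exists R.D\sqcap\forall R.C)^\mathcal{I}$). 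Moreover, your claim that producing $R'$ is \emph{equivalent} to the arrow $\exists R.(X\sqcup Y)\longrightarrow \exists R.X\sqcup\exists R.Y$ is unsubstantiated and is not what the paper needs: no such distributivity is derivable from Definitions~\ref{def:exist} and \ref{def:forall} (only monotonicity (\ref{exist-sub}) is available), which is precisely why your alternative concept-level variant --- decomposing $D$ as $(D\sqcap C)\sqcup(D\sqcap\neg C)$ and pushing $\exists R$ through the disjunction --- breaks down at that step, whereas the role-level route needs nothing of the sort.
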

 
\begin{proof}
\begin{enumerate}[wide, labelwidth=!, labelindent=0pt]
\item Arrow (\ref{forall-exists-bot}) is a consequence of (\ref{all-arrow1}) and (\ref{neg-bot}).

    \item Due to (\ref{neg-dual}), $C\longrightarrow D$ implies $\neg D \longrightarrow \neg C$. By (\ref{exist-sub}), we obtain $\exists R.\neg D\longrightarrow \exists R.\neg C$. Again, due to (\ref{neg-dual}), we have $\neg \exists R.\neg C\longrightarrow \neg \exists R.\neg D$. By definition with (\ref{all-arrow1}), it follows $\forall R.C \longrightarrow \forall R.D$. 
    
    
    \item To prove (\ref{forall-exist}) we need category-theoretical semantics of existential and universal restrictions. We define an object $R_X$ in $\mathscr{C}_r$ such that $R_X\longrightarrow R_{(\exists R.D)}$ and  $\mathsf{dom}(R_X)\longleftrightarrows \exists R.D \sqcap \forall R.C$. This implies that $\mathsf{cod}(R_X)\longrightarrow  \mathsf{cod}(R_{(\exists R.D)})$ due to the functor $\mathsf{cod}$ from $\mathscr{C}_r$ to $\mathscr{C}_c$, and  $\mathsf{dom}(R_X)\longrightarrow   \forall R.C$. By definition, we have $\mathsf{cod}(R_{(\exists R.D)})\longrightarrow D$. Thus, $\mathsf{cod}(R_X)\longrightarrow D$.    Due to Arrow~(\ref{all-arrow2}) and $\mathsf{dom}(R_X)\longrightarrow   \forall R.C$, we obtain $\mathsf{cod}(R_X)\longrightarrow  C$. Hence, $\mathsf{cod}(R_X)\longrightarrow  C\sqcap D$. By the definition of existential restrictions, we obtain $\mathsf{dom}(R_X)\longrightarrow   \exists  R.(D \sqcap C)$.  $\hfill \square$
\end{enumerate}
\end{proof}



Note that both $\mathscr{C}_c\langle C_0, \mathcal{O}\rangle$ and $\mathscr{C}_r\langle C_0, \mathcal{O}\rangle$ may consist of more  objects. However, new arrows should be entailed by those existing or by using the properties given in Definitions~(\ref{def:disj}-\ref{def:forall}). Adding to $\mathscr{C}_c\langle C_0, \mathcal{O}\rangle$ a new arrow that is \emph{independent} from those existing leads to a semantic change of the ontology.   Since all properties in Lemma~ \ref{lem:neg-prop}, \ref{lem:exist-prop} and \ref{lem:forall-prop} are consequences of those given in these definitions, they can be used to obtain \emph{entailed}  arrows (i.e. not independent ones).

\begin{theorem}[Arrow and subsumption]\label{thm:arrow-entail} Let $C_0$ be an $\mathcal{ALC}$ concept and   $\mathcal{O}$ an $\mathcal{ALC}$ ontology. Let  $\mathscr{C}_c\langle C_0,\mathcal{O} \rangle$ be an ontology  category. It holds that $\langle C_0,\mathcal{O}\rangle \models X\sqsubseteq Y$ (under set-theoretical semantics) if $X\longrightarrow Y$ is an arrow in $\mathscr{C}_c\langle C_0, \mathcal{O}\rangle$.
\end{theorem}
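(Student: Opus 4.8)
\noindent\emph{Proof plan.} The statement is a soundness claim, so the plan is to show that every set-theoretical model of $\mathcal{O}$ ``realizes'' the ontology category, turning each of its arrows into a subset inclusion. Concretely, I would fix an arbitrary interpretation $\mathcal{I}=\langle\Delta^\mathcal{I},\cdot^\mathcal{I}\rangle$ with $\mathcal{I}\models\mathcal{O}$ and first \emph{extend} $\cdot^\mathcal{I}$ to all the auxiliary objects that Definitions~\ref{def:syntax-cat}--\ref{def:forall} introduce into $\mathscr{C}_c\langle C_0,\mathcal{O}\rangle$ and $\mathscr{C}_r\langle C_0,\mathcal{O}\rangle$. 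Namely: put $R_\top^\mathcal{I}=\Delta^\mathcal{I}\times\Delta^\mathcal{I}$ and $R_\bot^\mathcal{I}=\emptyset$; interpret $\mathsf{dom}(R)$ and $\mathsf{cod}(R)$ as the first and second projections of $R^\mathcal{I}$ for every role object $R$ (which automatically validates the identifications $\mathsf{dom}(R_\top)=\top$, $\mathsf{cod}(R_\top)=\top$, $\mathsf{dom}(R_\bot)=\bot$, $\mathsf{cod}(R_\bot)=\bot$ of Definition~\ref{def:syntax-cat} and $\mathsf{dom}(R_{(\exists R.C)})\longleftrightarrows\exists R.C$ of (\ref{ex-arrow02})); and, for each existential restriction object, set
\[
R_{(\exists R.C)}^\mathcal{I}=\{\langle x,y\rangle\in R^\mathcal{I}\mid y\in C^\mathcal{I}\},
\]
so that $R_{(\exists R.C)}^\mathcal{I}\subseteq R^\mathcal{I}$, $\mathsf{cod}(R_{(\exists R.C)})^\mathcal{I}\subseteq C^\mathcal{I}$ and $\mathsf{dom}(R_{(\exists R.C)})^\mathcal{I}=(\exists R.C)^\mathcal{I}$. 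Every genuine $\mathcal{ALC}$ concept object keeps its usual interpretation from Definition~\ref{def:alc}. This extension is well defined by recursion on the size of concept objects, since the filler $C$ is strictly smaller than $\exists R.C$.

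\noindent Then I would prove, by induction on the derivation of an arrow (i.e.\ on the number of applications of the clauses of Definitions~\ref{def:syntax-cat}, \ref{def:onto-cat}, \ref{def:disj}, \ref{def:conj}, \ref{def:neg}, \ref{def:exist}, \ref{def:forall} used to introduce it), that every arrow $E\longrightarrow F$ of $\mathscr{C}_c\langle C_0,\mathcal{O}\rangle$ or $\mathscr{C}_r\langle C_0,\mathcal{O}\rangle$ satisfies $E^\mathcal{I}\subseteq F^\mathcal{I}$. The structural arrows of Definition~\ref{def:syntax-cat} (identities, $C\longrightarrow\top$, $\bot\longrightarrow C$, $R\longrightarrow R_\top$, $R_\bot\longrightarrow R$, composition, and the $\mathsf{dom}/\mathsf{cod}$-functor clauses $(a)$--$(d)$, e.g.\ ``$\mathsf{dom}(R)^\mathcal{I}=\emptyset\Rightarrow R^\mathcal{I}=\emptyset=R_\bot^\mathcal{I}$'') hold immediately, using transitivity of $\subseteq$ and monotonicity of projection; the ontology arrow $\top\longrightarrow\neg E\sqcup F$ holds because $\mathcal{I}\models E\sqsubseteq F$ gives $(\neg E\sqcup F)^\mathcal{I}=\Delta^\mathcal{I}$; and Arrows~(\ref{disj01})--(\ref{disj02}), (\ref{conj01})--(\ref{conj2}), (\ref{neg-bot})--(\ref{neg-min}), (\ref{ex-arrow01})--(\ref{ex-arrow03}) and (\ref{all-arrow1})--(\ref{all-arrow2}) become, under this interpretation, exactly the set-theoretical properties established in Lemmas~\ref{lem:disj}, \ref{lem:conj}, \ref{lem:neg}, \ref{lem:exist} and \ref{lem:forall}, the hypotheses those lemmas assume in the existential and universal cases (such as $R_{(\exists R.C)}^\mathcal{I}\subseteq R^\mathcal{I}$ and $\mathsf{dom}(R_{(\exists R.C)})^\mathcal{I}=(\exists R.C)^\mathcal{I}$) being precisely what the chosen extension secures. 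Since $X\longrightarrow Y$ is obtained by finitely many such steps, $X^\mathcal{I}\subseteq Y^\mathcal{I}$; as $\mathcal{I}$ ranges over all models of $\mathcal{O}$ and $X,Y$ retain their usual interpretation, this gives $\langle C_0,\mathcal{O}\rangle\models X\sqsubseteq Y$.

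\noindent The part I expect to be delicate is not any individual case but the interaction between the auxiliary role objects and the universal-property arrows~(\ref{ex-arrow03}) and (\ref{all-arrow2}), which quantify over \emph{all} subrelations $R'$: one must check that the single canonical extension of $\mathcal{I}$ chosen above makes the left-hand side of the ``iff'' in Lemmas~\ref{lem:exist} and \ref{lem:forall} true (so that those arrows are validated) and is simultaneously consistent with every other arrow touching $\mathsf{dom}$, $\mathsf{cod}$ and $R_{(\exists R.C)}$, with the nested recursion on concept size ensuring the whole construction is well founded. A second point to pin down is the reading of ``arrow of $\mathscr{C}_c\langle C_0,\mathcal{O}\rangle$'': the induction is over the arrows forced by the defining clauses, so the theorem is to be understood for that (minimal) category, or for any category whose additional arrows are themselves entailed by the clauses, since an \emph{independent} extra arrow would in general be unsound.
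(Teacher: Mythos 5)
Your proposal is correct and follows essentially the same route as the paper: a case analysis on how each arrow is introduced, reducing every case to the set-theoretical properties established in Lemmas~\ref{lem:disj}--\ref{lem:forall} (plus the trivial soundness of the axiom arrows $\top\longrightarrow\neg E\sqcup F$ and of composition). The only difference is one of rigor, not of method: you make explicit the canonical extension of $\cdot^\mathcal{I}$ to the auxiliary objects $R_\top$, $R_\bot$, $\mathsf{dom}(R)$, $\mathsf{cod}(R)$ and $R_{(\exists R.C)}$ and the induction over arrow derivations, both of which the paper's proof leaves implicit when it directly invokes those lemmas.
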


\begin{proof} First, each  arrow  $\top\longrightarrow \neg E \sqcup F$  added to  $\mathscr{C}_c\langle C_0, \mathcal{O}\rangle$  implies $E \longrightarrow   F$ due to Arrows~(\ref{neg-min}) and (\ref{neg-double}) for each axiom $E\sqsubseteq F$ of $\mathcal{O}$. By set-theoretical semantics, we have  $\langle C_0,\mathcal{O}\rangle  \models E\sqsubseteq F$. Due to Lemmas~\ref{lem:disj}, \ref{lem:conj}, \ref{lem:neg}, \ref{lem:exist} and \ref{lem:forall}, for each arrow $X\longrightarrow Y$ introduced  by Definitions~\ref{def:disj}, \ref{def:conj}, \ref{def:neg}, \ref{def:exist},  \ref{def:forall}, 
  we have $\langle C_0,\mathcal{O}\rangle  \models X\sqsubseteq Y$.
\end{proof}

 We now  introduce  category-theoretical satisfiability of an  $\mathcal{ALC}$ concept  with respect to an $\mathcal{ALC}$  ontology.

\begin{definition}\label{def:cat-satisfiability}
Let $C_0$ be an $\mathcal{ALC}$ concept, $\mathcal{O}$ an $\mathcal{ALC}$ ontology.  $C$ is category-theoretically unsatifiable with respect to $\mathcal{O}$ if  there is an ontology category $\mathscr{C}_c\langle C_0, \mathcal{O}\rangle$ which  consists of the arrow $C_0\longrightarrow \bot$.
\end{definition}

Since an ontology category $\mathscr{C}_c\langle C_0, \mathcal{O}\rangle$  may consist of objects arbitrarily built from the signature, Definition~\ref{def:cat-satisfiability} offers possibilities to build a larger ontology category  from which  new arrows can be discovered by applying arrows given in Definitions~(\ref{def:disj}-\ref{def:forall}).

\begin{theorem}\label{thm:cat-set} Let $\mathcal{O}$  be an $\mathcal{ALC}$ ontology and $C$ an  $\mathcal{ALC}$ concept. $C$ is category-theoreti\-cally unsatifiable with respect to $\mathcal{O}$ iff $C$ is set-theoretically unsatifiable.
\end{theorem}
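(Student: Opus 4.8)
The plan is to prove the two directions separately. The direction ``category-theoretically unsatisfiable $\Rightarrow$ set-theoretically unsatisfiable'' is essentially a corollary of Theorem~\ref{thm:arrow-entail}: if some ontology category $\mathscr{C}_c\langle C,\mathcal{O}\rangle$ contains the arrow $C\longrightarrow\bot$, then by Theorem~\ref{thm:arrow-entail} we have $\langle C,\mathcal{O}\rangle\models C\sqsubseteq\bot$ under set-theoretical semantics, i.e.\ $C^{\mathcal I}\subseteq\bot^{\mathcal I}=\emptyset$ in every model $\mathcal I$ of $\mathcal{O}$, so $C$ is set-theoretically unsatisfiable. I would state this explicitly as the easy half.

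For the converse, ``set-theoretically unsatisfiable $\Rightarrow$ category-theoretically unsatisfiable'', the contrapositive is cleaner: assume $C$ is \emph{not} category-theoretically unsatisfiable, i.e.\ \emph{no} ontology category $\mathscr{C}_c\langle C,\mathcal{O}\rangle$ contains the arrow $C\longrightarrow\bot$; I must build a set-theoretical model $\mathcal I\models\mathcal{O}$ with $C^{\mathcal I}\neq\emptyset$. The idea is to take a ``largest'' or ``canonical'' ontology category --- the one obtained by starting from the syntax category, adding $C$ and the axiom arrows $\top\longrightarrow\neg E\sqcup F$, and then closing under all the arrow-introduction rules of Definitions~\ref{def:disj}--\ref{def:forall} together with the functoriality and transitivity clauses of Definition~\ref{def:syntax-cat}, \emph{but never adding any independent arrow}. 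In this canonical category $\widehat{\mathscr{C}}$ the relation ``$X\longrightarrow Y$'' is a preorder on concept objects; by assumption $C\not\longrightarrow\bot$. From $\widehat{\mathscr{C}}$ I would construct a model by a Lindenbaum/canonical-model argument: take as domain $\Delta^{\mathcal I}$ the set of ``types'', i.e.\ suitably saturated sets of concept objects that do not contain $\bot$ (equivalently, prime filters of the preorder, or maximal consistent subsets), put a type into $A^{\mathcal I}$ iff $A$ belongs to it, and define $R^{\mathcal I}$ between types using the role objects $R_{(\exists R.D)}$ and the $\mathsf{dom}/\mathsf{cod}$ functors so that the existential witnesses demanded by Definition~\ref{def:exist} are realized and the universal constraints of Definition~\ref{def:forall} (Arrow~\ref{all-arrow2}) are respected. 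One then proves a Truth Lemma by induction on concept structure: $X\in t$ iff $t\in X^{\mathcal I}$, where the $\sqcap,\sqcup,\neg$ cases use Lemmas~\ref{lem:disj}, \ref{lem:conj}, \ref{lem:neg} and the De~Morgan consequences of Lemma~\ref{lem:neg-prop}, and the $\exists R.D,\forall R.C$ cases use Lemmas~\ref{lem:exist}, \ref{lem:forall} and the connecting properties of Lemmas~\ref{lem:exist-prop}, \ref{lem:forall-prop}. The axiom arrows $\top\longrightarrow\neg E\sqcup F$ force $E\sqsubseteq F$ to hold in $\mathcal I$, so $\mathcal I\models\mathcal{O}$; and since $C\not\longrightarrow\bot$ there is a type containing $C$, witnessing $C^{\mathcal I}\neq\emptyset$.

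The main obstacle is the construction of the canonical model and the verification of the existential/universal cases of the Truth Lemma. Unlike a standard modal canonical model, here the ``accessibility'' structure has to be read off the role syntax category $\mathscr{C}_r$ through the two functors $\mathsf{dom}$ and $\mathsf{cod}$, and one must check that the objects $R_{(\exists R.D)}$ supply enough witnesses while the clause~(\ref{all-arrow2}) for $\forall R.C$ is not violated --- this is exactly the place where the \emph{extra} arrows demanded by Definitions~\ref{def:exist} and~\ref{def:forall} (beyond the naive $\forall R.C\longleftrightarrows\neg\exists R.\neg C$) are needed, and I would expect the proof of Property~(\ref{forall-exist}), $\exists R.D\sqcap\forall R.C\longrightarrow\exists R.(D\sqcap C)$, to be the key lemma invoked to make the universal case go through. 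A secondary subtlety is well-definedness: since an ontology category may contain arbitrarily many compound objects, I would either restrict attention to the subcategory generated by subconcepts of $C$ and of the axioms in $\mathcal{O}$ (a finite closure, enough for the induction), or argue that adding further objects cannot create the arrow $C\longrightarrow\bot$ if it was absent in the generated subcategory, so working with the finite generated category is without loss of generality.
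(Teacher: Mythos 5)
Your easy direction coincides with the paper's: the paper also disposes of ``category-theoretically unsatisfiable $\Rightarrow$ set-theoretically unsatisfiable'' by observing that every arrow in an ontology category is either an axiom arrow, an arrow forced by Definitions~\ref{def:disj}--\ref{def:forall} (sound by Lemmas~\ref{lem:disj}--\ref{lem:forall}, i.e.\ by Theorem~\ref{thm:arrow-entail}), or a composite, so $C\longrightarrow\bot$ yields $\mathcal{O}\models C\sqsubseteq\bot$. For the hard direction, however, you take a genuinely different route. The paper (Lemma~\ref{lem:set2cat}) argues proof-theoretically: it runs the standard $\mathcal{ALC}$ tableau algorithm, uses the known completeness fact that set-unsatisfiability forces a clash in every complete completion tree, and then \emph{constructs} the witnessing ontology category explicitly by propagating clashes upward through the tree of completion trees, turning node labels into conjunction objects $\bigsqcap_{Y\in L(y)}Y\longrightarrow\bot$ and passing through $\exists$/$\forall$ steps via (\ref{exist-empty}) and (\ref{forall-exist}) and through disjunction branchings via distributivity (\ref{conj2}), until $C\longrightarrow\bot$ is reached. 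You instead prove the contrapositive semantically: freely close a canonical ontology category under the arrow rules, take saturated types/prime filters as a domain, and establish a truth lemma, i.e.\ a Jónsson--Tarski/Lindenbaum-style completeness argument in which the witnessing category is the canonical one. Your plan is workable, and you correctly identify (\ref{forall-exist}) (with (\ref{exist-empty}) and (\ref{exist-sub})) as the crux of the existential/universal case; note that the Lindenbaum/prime-filter step also needs exactly the distributivity arrow (\ref{conj2}), the same arrow the paper needs at its disjunction nodes, and that your ``finite generated subcategory'' must contain not just subconcepts but the finitely many conjunctions of subsets of them (and the corresponding $\exists R.(D\sqcap C_1\sqcap\dots\sqcap C_n)$ objects) for the witness argument to be expressible. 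What each approach buys: the paper's construction leans on established tableau completeness for $\mathcal{ALC}$ and directly exhibits the category demanded by the existential formulation of Definition~\ref{def:cat-satisfiability}, at the cost of a somewhat delicate bookkeeping over the tree of completion trees; your canonical-model route is self-contained and closer to a general completeness theorem for the categorical calculus, but the deferred truth-lemma induction and the precise definition of $R^{\mathcal I}$ between types carry the real weight and would need to be written out in full.
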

To prove this theorem, we need the following preliminary result.

\begin{lemma}\label{lem:set2cat}
Let $\mathcal{O}$  be an $\mathcal{ALC}$ ontology and $C$  an  $\mathcal{ALC}$ concept. If $C$ is set-theoretically unsatifiable with respect to $\mathcal{O}$ then $C$ is  category-theoretically unsatifiable. 
\end{lemma}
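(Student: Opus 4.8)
The plan is to exhibit one explicit ontology category that already contains the arrow $C\longrightarrow\bot$ and meets every clause of Definitions~\ref{def:syntax-cat}, \ref{def:onto-cat} and \ref{def:disj}--\ref{def:forall}; since $C$ is set-theoretically unsatisfiable with respect to $\mathcal{O}$ we have $\mathcal{O}\models C\sqsubseteq\bot$, and every arrow the constructor definitions require is a \emph{sound} set-theoretic inclusion, so the right choice is the ``largest'' candidate: the category in which an arrow $X\longrightarrow Y$ is present exactly when the corresponding inclusion holds in every model of $\mathcal{O}$. To make the soundness lemmas applicable in one shot I would realise this inside a single interpretation. If $\mathcal{O}$ has no model, take the category containing all the objects listed below in which \emph{every} ordered pair of objects carries an arrow; it vacuously satisfies all the clauses and contains $C\longrightarrow\bot$. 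Otherwise, by L\"owenheim--Skolem it suffices to fix a set of representative models of $\mathcal{O}$; let $\mathcal{J}$ be their disjoint union, so that $\mathcal{J}\models\mathcal{O}$ and, for all $\mathcal{ALC}$ concepts $X,Y$, $X^{\mathcal{J}}\subseteq Y^{\mathcal{J}}$ iff $\mathcal{O}\models X\sqsubseteq Y$.

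The objects of $\mathscr{C}_c\langle C,\mathcal{O}\rangle$ are taken to be $\top$, $\bot$, all $\mathcal{ALC}$ concepts over $\langle\mathbf{C},\mathbf{R}\rangle$ (so $C$ and each $\neg E\sqcup F$ with $E\sqsubseteq F\in\mathcal{O}$ occur), plus a fresh object $\mathsf{dom}(S)$ and $\mathsf{cod}(S)$ for every role object $S$; the objects of $\mathscr{C}_r\langle C,\mathcal{O}\rangle$ are $R_\bot$, $R_\top$, every role name $R$, and every $R_{(\exists R.D)}$ with $\exists R.D$ a concept object. I interpret the auxiliary objects in $\mathcal{J}$ in the only reasonable way: $R_\bot^{\mathcal{J}}=\emptyset$, $R_\top^{\mathcal{J}}=\Delta^{\mathcal{J}}\times\Delta^{\mathcal{J}}$, $R_{(\exists R.D)}^{\mathcal{J}}=\{\langle x,y\rangle\in R^{\mathcal{J}}\mid y\in D^{\mathcal{J}}\}$, and $\mathsf{dom}(S)^{\mathcal{J}}$, $\mathsf{cod}(S)^{\mathcal{J}}$ the first and second projections of $S^{\mathcal{J}}$. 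Finally, an arrow $X\longrightarrow Y$ (in either category) is declared present iff $X^{\mathcal{J}}\subseteq Y^{\mathcal{J}}$.

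Then I would verify, clause by clause, that this data is an ontology category. Initiality/terminality of $\bot,\top,R_\bot,R_\top$, existence of identities, closure under composition (transitivity of $\subseteq$), and clauses~4(a)--(d) of Definition~\ref{def:syntax-cat} are elementary facts about inclusions and projections of binary relations — in particular $\mathsf{dom}(R_{(\exists R.D)})^{\mathcal{J}}=(\exists R.D)^{\mathcal{J}}$, whence the two-way arrow $\mathsf{dom}(R_{(\exists R.D)})\longleftrightarrows\exists R.D$, and the fact that an empty projection forces an empty relation. Clause~3 of Definition~\ref{def:onto-cat} holds because $E\sqsubseteq F\in\mathcal{O}$ gives $E^{\mathcal{J}}\subseteq F^{\mathcal{J}}$, hence $\top^{\mathcal{J}}\subseteq(\neg E\sqcup F)^{\mathcal{J}}$. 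For the constructor clauses the argument is uniform: each arrow they require — the unconditional ones such as $C\longrightarrow C\sqcup D$, and each conditional ``$\forall X,\dots\Longrightarrow\dots$'' with $X$ ranging over \emph{all} objects, the fresh ones included — is a genuine inclusion in $\mathcal{J}$, which is exactly what the ``$\Longrightarrow$'' directions of Lemmas~\ref{lem:disj}, \ref{lem:conj}, \ref{lem:neg}, \ref{lem:exist} and \ref{lem:forall} assert once one notes that their hypotheses ($(C\sqcup D)^{\mathcal{J}}=C^{\mathcal{J}}\cup D^{\mathcal{J}}$, $(C\sqcap D)^{\mathcal{J}}=C^{\mathcal{J}}\cap D^{\mathcal{J}}$, $(\neg C)^{\mathcal{J}}=\Delta^{\mathcal{J}}\setminus C^{\mathcal{J}}$, $R_{(\exists R.C)}^{\mathcal{J}}\subseteq R^{\mathcal{J}}$, $(\forall R.C)^{\mathcal{J}}=(\neg\exists R.\neg C)^{\mathcal{J}}$, and so on) hold in $\mathcal{J}$ by the definition of the set-theoretic semantics. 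Thus the data is an ontology category, and since $C^{\mathcal{J}}=\emptyset=\bot^{\mathcal{J}}$ it contains $C\longrightarrow\bot$; by Definition~\ref{def:cat-satisfiability}, $C$ is category-theoretically unsatisfiable.

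I expect the real effort to be the bookkeeping around the role category and the functors $\mathsf{dom},\mathsf{cod}$, rather than any single deep step: one must ensure that every role object the construction is forced to carry — the $R_{(\exists R.D)}$ witnessing Arrows~(\ref{ex-arrow01})--(\ref{ex-arrow03}) and the $\mathsf{dom},\mathsf{cod}$ images of all role objects — is interpreted in $\mathcal{J}$ so that clause~4 of Definition~\ref{def:syntax-cat} and Definition~\ref{def:exist} hold at the same time, and that the fresh objects $\mathsf{cod}(R)$, which are in general not $\mathcal{ALC}$-definable since $\mathcal{ALC}$ lacks inverse roles, never induce an arrow that is not a true inclusion in $\mathcal{J}$ (they do not, since interpreting $\mathsf{cod}(R)$ as the range of $R^{\mathcal{J}}$ is exactly what the clauses expect). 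If a finite object set is wanted one may replace ``all $\mathcal{ALC}$ concepts'' by the closure of $\{C\}\cup\mathcal{O}$ under subconcepts and negation together with the finitely many $\mathsf{dom},\mathsf{cod}$ witnesses, but finiteness plays no role in the argument.
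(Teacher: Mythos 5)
Your construction is genuinely different from the paper's: the paper proves this lemma syntactically, running a tableau on $(C,\mathcal{O})$ and propagating the clash of every complete completion tree up the tree $\mathbb{T}$ of trees, composing only arrows licensed by the definitional rules (distributivity~(\ref{conj2}), (\ref{exist-empty}), (\ref{forall-exist}), \ldots) until it reaches $C\longrightarrow\bot$; you instead build a ``canonical'' interpretation $\mathcal{J}$ and declare an arrow $X\longrightarrow Y$ whenever $X^{\mathcal{J}}\subseteq Y^{\mathcal{J}}$. The problem is what counts as an ontology category. If one reads Definitions~\ref{def:syntax-cat}--\ref{def:forall} purely as closure conditions that any collection of arrows may satisfy, then your category (and even your all-arrows category for the inconsistent case) does satisfy every clause --- but so would the category of inclusions in an \emph{arbitrary} model of $\mathcal{O}$, e.g.\ one in which a satisfiable concept name $A$ happens to be empty, which would make $A$ category-theoretically unsatisfiable and contradict the soundness direction of Theorem~\ref{thm:cat-set}. (A telltale sign is that your justification of the all-arrows category never uses the inconsistency of $\mathcal{O}$.) So that liberal reading cannot be the intended one: the paper explicitly stipulates, in the remark before Theorem~\ref{thm:arrow-entail} and in the case analysis (i)--(iii) in the proof of Theorem~\ref{thm:cat-set}, that an ontology category may contain only arrows that are axiom arrows, arrows introduced by Definitions~\ref{def:disj}--\ref{def:forall}, or composites thereof --- no ``independent'' arrows.

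Under that intended reading your proof has a genuine gap, indeed it begs the question. Choosing $\mathcal{J}$ so that $X^{\mathcal{J}}\subseteq Y^{\mathcal{J}}$ iff $\mathcal{O}\models X\sqsubseteq Y$ only guarantees that every arrow you add is a \emph{sound} subsumption (which is the easy direction, Theorem~\ref{thm:arrow-entail}); it does not show that each such arrow --- in particular $C\longrightarrow\bot$ --- is \emph{entailed} by the axiom arrows $\top\longrightarrow\neg E\sqcup F$ together with the constructor rules, which is precisely the completeness content of Lemma~\ref{lem:set2cat}. Asserting that your entailment-closed category is an admissible ontology category is equivalent to asserting the lemma itself. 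To repair the argument you would have to exhibit a derivation of $C\longrightarrow\bot$ inside the rule-generated category, which is exactly the work the paper's tableau construction does (clash at the leaves, passage through $\exists$/$\forall$ predecessors via (\ref{exist-empty}) and (\ref{forall-exist}), passage through disjunction nodes via (\ref{conj2}), up to the root). A secondary, fixable issue: the property you want of $\mathcal{J}$ needs an actual argument (one countermodel per non-entailed subsumption, closure of $\mathcal{ALC}$ models under disjoint unions), not just an appeal to L\"owenheim--Skolem; but this is moot given the main gap.
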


In order to prove this lemma, we use a tableau algorithm to generate from a unsatisfiable  $\mathcal{ALC}$ concept with respect to an ontology a set of completion trees each of which contains a clash ($\bot$ or a pair $\{A,\neg A\}$ where $A$ is a concept name).  To ensure self-containedness of the paper, we describe   here necessary elements which allow to follow the proof of the lemma. We refer the readers to \cite{baa2000,hor07} for formal details.     

We use $\mathsf{sub}(C,\mathcal{O})$ to denote  a set of subconcepts in NNF (i.e. negations appear only in front of concept name) from $C$ and $\mathcal{O}$. This set can be built by applying the following rules :(i) $C \in\mathsf{sub}(C,\mathcal{O})$ and $\neg E \sqcup F \in\mathsf{sub}(C,\mathcal{O})$ for each axiom $E\sqsubseteq F$; (ii) if $\neg C \in\mathsf{sub}(C,\mathcal{O})$ then $C\in \mathsf{sub}(C,\mathcal{O})$; (iii)  if $C \sqcap D$ or  $C \sqcup D\in\mathsf{sub}(C,\mathcal{O})$ then $C, D\in \mathsf{sub}(C,\mathcal{O})$; (iv) if $\exists R.C $ or  $\forall R.C \in\mathsf{sub}(C,\mathcal{O})$ then $C\in \mathsf{sub}(C,\mathcal{O})$. A \emph{completion tree} $T=\langle v_0, V, E, L\rangle$ is a tree where $V$ is a set of nodes, and each node $x\in V$ is labelled with  $L(x)\subseteq \mathsf{sub}(C,\mathcal{O})$, and  a root $v_0\in V$ with $C\in L(v_0)$; $E$ is a set of edges, and each edge $\langle x,y\rangle\in E$ is labelled with a role $L\langle x,y\rangle=\{R\}$. In a completion tree $T$, a node $x$ is \emph{blocked} by an ancestor $y$ if $L(x)=L(y)$. 

To build completion trees, a tableau algorithm starts by initializing a tree $T_0$ and  applies the following \emph{completion rules} to each \emph{clash-free} tree $T$ : \textbf{[$\sqsubseteq$-rule]} for each axiom $E\sqsubseteq F$ and each node $x$, $\neg E\sqcup F \in L(x)$; \textbf{[$\sqcap$-rule]} if $E \sqcap F\in L(x)$ then $E,F\in L(x)$; \textbf{[$\sqcup$-rule]} if $E \sqcup F\in L(x)$ and $\{E,F\}\cap L(x)=\emptyset$ then it creates two copies $T_1$ and $T_2$ of the current tree $T$,  and set $L(x_1)\gets L(x_1)\cup \{E\}$, $L(x_2)\gets L(x_2)\cup \{F\}$ where $x_i$ in  $T_i$   is a copy of $x$ from $T$.  \textbf{[$\exists$-rule]} if $\exists R.C\in L(x)$, $x$ is not blocked and $x$ has no  edge $\langle x,x'\rangle$ with $L(\langle x,x'\rangle=\{R\}$ then it creates a successor $x'$ of $x$, and set $L(x')\gets\{C\}$, $L(\langle x,x'\rangle\gets\{R\}$; \textbf{[$\forall$-rule]} if $\forall R.C\in L(x)$ and $x$ has a successor $x'$ with    $L(\langle x,x'\rangle=\{R\}$ then it adds $C$ to $L(x')$.

When   \textbf{[$\sqcup$-rule]} is applied   to a node $x$ of a completion tree $T$ with $E\sqcup F\in L(x)$, it   generates two children trees $T_1$ and $T_2$  with a node $x_1$ in $T_1$ such that $E\sqcup F,E\in L(x_1)$,   and a node $x_2$ in $T_2$ such that $E\sqcup F,F\in L(x_2)$ as described above.  In this case,  we say that $T$ is parent of $T_1$ and $ T_2$ by $x$; or $T_1$ and $ T_2$ are children of $T$ by $x$;   
$x$ is called a \emph{disjunction} node by $E\sqcup F$; and $x_1,x_2$ are called  \emph{disjunct} nodes of $x$ by $E\sqcup F$.  We use $\mathbb{T}$ to denote the tree whose nodes are completion trees generated by the tableau algorithm as described above.  

A completion tree is \emph{complete} if no completion rule is applicable. It was shown that if $C$ is set-theoretically unsatisfiable with respect to $\mathcal{O}$ then all \emph{complete completion trees} contain a clash (an incomplete completion tree may contain a clash)  \cite{baa2000,hor07}. Since this result does not depend on the order of applying completion rules to nodes, we can assume in this proof that the following order   \textbf{[$\sqsubseteq$-rule]}, \textbf{[$\sqcap$-rule]}, \textbf{[$\forall$-rule]}, \textbf{[$\exists$-rule]} and  \textbf{[$\sqcup$-rule]} is used, and a completion rule should be applied to the most ancestor node if applicable. 
Some properties are drawn from this assumption.
\begin{enumerate}[label=(P\arabic*), leftmargin=*]
\item\label{prop:1} If a node $x$ of a completion tree $T$ contains a clash then $x$ is a leaf node of $T$.

\item\label{prop:2} For a completion tree $T$, if  a node $y$ is an ancestor of a node $x$  in  $T$ such that $x,y$ are disjunction nodes, then the children trees by $y$ are ancestors of the children trees by $x$   in $\mathbb{T}$.

\end{enumerate}

\noindent \textbf{Proof of Lemma~\ref{lem:set2cat}}. We define an ontology category $\mathscr{C}_c\langle C, \mathcal{O}\rangle$  from $\mathbb{T}$ by starting from leaf trees of   $\mathbb{T}$. By \ref{prop:1}, each leaf tree $T$ of  $\mathbb{T}$ contains a clash in a leaf node $y$.  We have  $\bigsqcap_{Y\in L(y)} Y\longrightarrow \bot$. We add an object $\bigsqcap_{Y\in L(y)} Y$ to $\mathscr{C}_c\langle C, \mathcal{O}\rangle$ with the arrow. In the sequel, we try to define a sequence of arrows started with $\bigsqcap_{Y\in L(y)} Y\longrightarrow \bot$  which makes clashes propagate from the leaves into the root of $\mathbb{T}$. This propagation has to get through two crucial kinds of passing: from a node of a completion tree to an ancestor that is a disjunct node; and from such a disjunct node in a completion tree to its disjunction node in the parent completion tree in $\mathbb{T}$.

Let $T_1$ and $T_2$ be two leaf trees of $\mathbb{T}$ whose parent is $T$, and $y_1, y_2$ two leaf nodes of $T_1,T_2$ containing a clash.  By construction, $T_i$ has  a disjunct node $x_i$  which is an ancestor of $y_i$     with  $x_i\neq y_i$, and there is no disjunct node between $x_i$ and $y_i$ (among descendants of $x_i$). For each node $z$ between $y_i$ and $x_i$ if it exists, we define an object $\bigsqcap_{Z\in L(z)} Z$ and add them to $\mathscr{C}_c\langle C, \mathcal{O}\rangle$.  Let $z'$ be the parent node of $y_i$. There is  a concept $\exists R.D\in L(z')$ and $D\in L(y_i)$. In addition, if  $\forall R.D_1\in L(z')$ and \textbf{[$\forall$-rule]} is applied to $z'$ then  $D_1\in L(y_i)$. We show that $D\longrightarrow Z$ or  $D_1\longrightarrow Z$ for each concept $Z\in L(y_i)$. By construction,    \textbf{[$\sqcup$-rule]} is not applied to $y_i$ (and any node from $x_i$ to $y_i$). If $Z=\neg E \sqcup F$ comes from an axiom $E\sqsubseteq F$, then $D\longrightarrow \top \longrightarrow Z$. If $Z$ comes from  some $Z \sqcap Z'$  then we have to have already  $D\longrightarrow  Z \sqcap Z'\longrightarrow Z$ or $D_1\longrightarrow  Z \sqcap Z'\longrightarrow Z$. Hence, $\bigsqcap_{Y\in L(y_i)} Y\longrightarrow \bot$ implies $D\sqcap \bigsqcap_{\forall R.D_i\in L(z')} D_i\longrightarrow \bot$. By (\ref{exist-empty}) and  (\ref{forall-exist}), we obtain $\exists R.D\sqcap \bigsqcap_{\forall R.D_i\in L(z')} \forall R.D_i\longrightarrow \bot$, and thus $\bigsqcap_{Z\in L(z')} Z\longrightarrow \bot$. By using the same argument, we obtain $\bigsqcap_{X\in L(x_i)} X\longrightarrow \bot$. We add an object  $\bigsqcap_{X\in L(x_i)} X$ to $\mathscr{C}_c\langle C, \mathcal{O}\rangle$ with the arrow.

We now show $\bigsqcap_{X\in L(x)} X\longrightarrow \bot$ where $x$ is the disjunction node of $x_i$, $L(x)=W\cup \{E\sqcup F\}$,  $L(x_1)=W\cup \{E\sqcup F, E\}$, $L(x_2)=W\cup \{ E\sqcup F, F\}$. Due to (\ref{conj2}), we have $\bigsqcap_{V\in W} V\sqcap (E\sqcup F) \longrightarrow (\bigsqcap_{V\in W} V\sqcap  E) \sqcup (\bigsqcap_{V\in W} V \sqcap F)$. Moreover, since $\bigsqcap_{X\in L(x_i)} X\longrightarrow \bot$, we have $ (\bigsqcap_{V\in W} V\sqcap  E) \longrightarrow (\bigsqcap_{V\in W} V\sqcap (E\sqcup F) \sqcap E \longrightarrow  \bigsqcap_{X\in L(x_1)} X \longrightarrow\bot$ (we have $(E\sqcup F)\sqcap E\longrightarrow E$ due to (\ref{disj01}) and (\ref{conj1})), and $(\bigsqcap_{V\in W} V\sqcap  F) \longrightarrow (\bigsqcap_{V\in W} V\sqcap (E\sqcup F) \sqcap F \longrightarrow  \bigsqcap_{X\in L(x_2)} X \longrightarrow\bot$. Therefore, we obtain $\bigsqcap_{V\in W} V\sqcap (E\sqcup F) \longrightarrow\bot$ due to (\ref{conj2}), then add an object $\bigsqcap_{V\in W} V\sqcap (E\sqcup F)$ to $\mathscr{C}_c\langle C, \mathcal{O}\rangle$ with the arrow.

We can apply the same argument from $x$ to the next disjunct node  in $T$ which is an ancestor of $x$ according to \ref{prop:2},  and go upwards in $\mathbb{T}$ to find its parent and sibling. This process can continue and reach the root tree $T_0$ of $\mathbb{T}$. We get   $\bigsqcap_{X\in L(x_0)} X \longrightarrow \bot$ where $x_0$ is the root node of $T_0$. By construction, we have $\{C\}\cup \{\neg E\sqcup F\mid E\sqsubseteq F\in \mathcal{O}\}\subseteq L(x_0)$.  Let $X\in L(x_0)$. If $X=\neg E \sqcup F$ for some axiom $E  \sqsubseteq F\in \mathcal{O}$ then  $C\longrightarrow \top \longrightarrow \neg E \sqcup F$. Moreover, if $X=E$ or $X=F$ with $C= E \sqcap F$ then $C\longrightarrow E \sqcap F$. This implies that    $C  \longrightarrow  \bigsqcap_{X\in L(x_0)} X \longrightarrow \bot$. Hence,   $C \longrightarrow  \bot$. It is straightforward that $\mathscr{C}_c\langle C, \mathcal{O}\rangle$ is an ontology category consisting of  $C \longrightarrow  \bot$. Moreover, an arrow $\top \longrightarrow\neg E\sqcap F$ is added to $\mathscr{C}_c\langle C, \mathcal{O}\rangle$ for each axiom $E\sqsubseteq F$. This completes the proof of the lemma.

\noindent \textbf{Proof of Theorem~\ref{thm:cat-set}}.
\noindent ``$\Longleftarrow$". Let $\mathscr{C}_c\langle C, \mathcal{O}\rangle$ be an ontology category. Assume that there is   an arrow $C\longrightarrow \bot$ in $\mathscr{C}_c\langle C, \mathcal{O}\rangle$. Every
 arrow $X\longrightarrow Y$ added to $\mathscr{C}_c\langle C, \mathcal{O}\rangle$  must be one of following cases:
     (i) $X\sqsubseteq Y$ is an axiom of $\mathcal{O}$. Thus, $\mathcal{O}\models X\sqsubseteq Y$.
     (ii) $X\longrightarrow Y$ is added by Definitions~\ref{def:disj}, \ref{def:conj}, \ref{def:neg}, \ref{def:exist},  \ref{def:forall}. Due to  Lemma~\ref{lem:arrow-entail}, we have $\mathcal{O}\models X\sqsubseteq Y$.
     (iii) $X\longrightarrow Y$ is obtained by transitivity from  $X\longrightarrow Z$ and $Z\longrightarrow Y$. It holds that    $\mathcal{O}\models X\sqsubseteq Z$ and $\mathcal{O}\models Z\sqsubseteq Y$ imply $\mathcal{O}\models X\sqsubseteq Y$.
Hence, if $\mathscr{C}_c\langle C, \mathcal{O}\rangle$ consists of   an arrow $C\longrightarrow \bot$, then $\mathcal{O}\models C\sqsubseteq \bot$.

\noindent ``$\Longrightarrow$". A consequence of Lemma~\ref{lem:set2cat}.\hfill$\square$

\section{Conclusion and Future Work}

We have presented a rewriting of the usual set-theoretical semantics of $\mathcal{ALC}$  by using categorial language and showed that these two semantics are equivalent in the sense that the problem of concept unsatisfiability is the same in both settings. Thanks to the modular representation of  category-theoretical semantics  composed of separate constraints, one can identify and design interesting sublogics by removing some arrows from categorical definitions rather than an entire constructor. Such sublogics  may necessitate new reasoning procedures. We believe that category-theoretical semantics can be extended to more expressive DLs with role constructors. For instance, role functionality can be expressed as \emph{monic and epic arrows}.

\bibliographystyle{elsarticle-num} \bibliography{categoryALC}

\end{document}